\definecolor{dark2Green}{HTML}{1B9E77}
\definecolor{dark2Orange}{HTML}{D95F02}
\definecolor{dark2Purple}{HTML}{7570B3}
\definecolor{dark2Pink}{HTML}{E7298A}
\pgfplotsset{compat=newest}
\tikzstyle{midbox} = [rectangle, rounded corners, minimum width=2.62cm, minimum height=.7cm,text centered, draw=black, fill=black!30]
\tikzstyle{arrow} = [thick,->,>=stealth]
\newtcolorbox{supplementbox}{
    breakable,
    enhanced,
    width=\linewidth,
    sharp corners=all,
    colback=white!97!black,
    colframe=black,
    boxrule=0.4pt,
    borderline={0.4pt}{2pt}{black},
}
\tikzset{external/system call={lualatex
    -shell-escape
    \tikzexternalcheckshellescape 
    -halt-on-error -interaction=batchmode
    -jobname "\image" "\texsource"}
    }
\newcommand{\R}{\ensuremath{\mathbb R}}  
\newcommand{\N}{\ensuremath{\mathbb N}} 
\newcommand{\dd}{\ell}
\newcommand{\one}{\mathds{1}}
\renewcommand{\k}{\mathsf{k}}
\DeclareMathOperator*{\argmin}{arg\,min}
\newtheorem{theorem}{Theorem}[section]
\newtheorem{lemma}[theorem]{Lemma}
\newtheorem{corollary}[theorem]{Corollary}
\newtheorem{remark}[theorem]{Remark}
\newtheorem{proposition}[theorem]{Proposition}
\title[Excitation of control-affine systems and Koopman error bounds]{Excitation of control-affine systems and Koopman error bounds}
\author[Schmitz et al.]{Philipp Schmitz}
\address{Optimization-based Control Group, Institute of Mathematics, Technische Universität Ilmenau, Ilmenau, Germany}
\email{\href{mailto:philipp.schmitz@tu-ilmenau.de}{philipp.schmitz@tu-ilmenau.de}}
\thanks{PS gratefully acknowledges support from the Carl Zeiss Foundation (VerneDCt – Project No. 2021-10-003). %
LB gratefully acknowledges funding by the German Research Foundation (DFG; project numbers~$535860958$ and~$545246093$). %
FMP was funded by the DFG, project number 519323897. He was further supported by the free state of Thuringia and the German Federal Ministry of Education and Research (BMBF) within the project {\it THInKI--Th\"uringer Hochschulinitiative für KI im Studium}. %
PE and MR gratefully acknowledge support from the German Research Foundation (DFG; project numbers~$545246093$ and~$433183605$). %
HE is grateful for the past support from the German Research Foundation (DFG; project number~$545246093$).}
\author[]{Lea Bold}
\address{Optimization-based Control Group, Institute of Mathematics, Technische Universität Ilmenau, Ilmenau, Germany}
\email{\href{mailto:lea.bold@tu-ilmenau.de}{lea.bold@tu-ilmenau.de}}
\author[]{Friedrich M.\ Philipp}
\address{Optimization-based Control Group, Institute of Mathematics, Technische Universität Ilmenau, Ilmenau, Germany}
\email{\href{mailto:friedrich.philipp@tu-ilmenau.de}{friedrich.philipp@tu-ilmenau.de}}
\author[]{Mario Rosenfelder}
\address{Institute of Engineering and Computational Mechanics, University of Stuttgart, Stuttgart, Germany}
\email{\href{mailto:mario.rosenfelder@uni-stuttgart.de}{mario.rosenfelder@uni-stuttgart.de}}
\author[]{Peter Eberhard}
\email{\href{mailto:peter.eberhard@uni-stuttgart.de}{peter.eberhard@uni-stuttgart.de}}
\author[]{Henrik Ebel}
\address{Department of Mechanical Engineering, LUT University, Lappeenranta, Finland}
\email{\href{mailto:henrik.ebel@lut.fi}{henrik.ebel@lut.fi}}
\author[]{Karl Worthmann}
\address{Optimization-based Control Group, Institute of Mathematics, Technische Universität Ilmenau, Ilmenau, Germany}
\email{\href{mailto:karl.worthmann@tu-ilmenau.de}{karl.worthmann@tu-ilmenau.de}}
\begin{document}
\tikzexternaldisable
	
	\begin{abstract}
	The Koopman operator and extended dynamic mode decomposition (EDMD) as a data-driven technique for its approximation have attracted considerable attention as a key tool for modeling, analysis, and control of complex dynamical systems. 
    However, extensions towards control-affine systems resulting in bilinear surrogate models are prone to demanding data requirements rendering their applicability intricate. 
    In this paper, we propose a framework for data-fitting of control-affine mappings to increase the robustness margin in the associated system identification problem and, thus, to provide reliable bilinear EDMD schemes.
    In particular, guidelines for input selection based on subspace angles are deduced such that a desired threshold with respect to the minimal singular value is ensured. Moreover, we derive necessary and sufficient conditions of optimality for maximizing the minimal singular value. 
    Further, we demonstrate the usefulness of the proposed approach using bilinear EDMD with control for nonholonomic robots.
	\end{abstract}
    
    \maketitle
	
	\section{Introduction}
    
As an idea, system identification can arguably be considered as old as mathematical system modeling in general, considering that even many fundamental laws of physics were identified from observation data. 
As such, the identification of systems from measurement data has long been a tried-and-tested procedure for engineers, with approaches being as varied as application areas. 
In application practice, classically, there is an emphasis on linear system representations, e.g., in the form of experimental modal analysis in mechanical engineering. 
In contrast, the field of nonlinear system identification is, by its nature, even more varied, greatly depending on properties of the specific system~\cite{schoukens2019nonlinear}. 
Lately, not least because of the success of data-driven and machine-learning methods in other fields, there has been a renewed surge in research in data-driven system modeling, whether the focus is primarily on the system identification itself~\cite{schoukens2019nonlinear}, on data-driven surrogate modeling~\cite{kutz2016dynamic}, or other applications, for instance using deep-learning techniques like autoencoders~\cite{beintema2023deep,masti2021learning}. 

A popular application of system identification is to identify models to be used for optimization-based control such as model predictive control~(MPC). 
Notable examples include Gaussian process-based (predictive) control~\cite{bradford2020stochastic,esch21gpmpc,lederer2025episodic}. 
A slightly different perspective is taken by DeePC~\cite{Coulson2019DeePC}, which has become quite popular recently and which integrates the system identification step together with estimation into the optimal control problem to be solved~\cite{willems2005note,faulwasser2023behavioral,chen2026deepc}. 
Still, it has been shown that, at least under specific circumstances, the results can be equivalent as to when identifying the system in a preceding step using the same data, see \cite{fiedler2021subscapedeepc}. 
In any case, when identifying systems for control, it is worth highlighting that control is often used for automation, so controlled systems may operate without human supervision, making it particularly crucial that identified models used for control, and, hence, also the data used in the identification process, are of sufficient quality to enable reliable control performance, especially in safety-critical applications. 
Thus, control-theoretic considerations demand certain properties from identification methods that make them amenable to formal verification of desired closed-loop properties. 
For this, bounds on approximation errors can be particularly helpful, e.g., to ensure asymptotic stability~\cite{schimperna2025data}.  
Moreover, application-side considerations demand answers to the question when enough data is collected, or, put differently, when data is suitable for the learning task~\cite{van2023informativity}. 
Suitability is there often evaluated via rank conditions as, e.g., in persistency of excitation~\cite{willems2005note}, see also~\cite{coulson2022quantitative,alsalti2023design} for first steps towards quantitative notions. 
Such considerations become particularly relevant when trying to identify systems online from measurement data, in the spirit of online- or self-learning, exploration-and-exploitation schemes~\cite{manz20mpclearnonline,esch23exploreexploit}, or adaptive model-based control through adapting data-inferred models, see, e.g., \cite{bold2025two}. 

A particularly important subclass of nonlinear systems consists of control-affine systems as many mechatronic systems like robots and vehicles can be described by control-affine models, and as the control-affinity provides enough structure to arrive at meaningful and narrow-enough conclusions.
For bilinear systems, a subset of this class, the problems of persistent excitation and identifiability have been studied in detail, establishing foundational results on the input signals required for successful identification~\cite{Dasgupta1989,Sontag2009}.
We are concerned with general control-affine structures and their approximation from data as it appears, e.g., when using bilinear extended dynamic mode decomposition with control (bilinear EDMDc) as proposed in~\cite{surana2016koopman,peitz:otto:rowley:2020} or its variants~\cite{philipp2025error} based on kernel extended dynamic mode decomposition (kernel EDMD, \cite{klus2020kernel}).
As shown by~\cite{BoldPhil24,schimperna2025data}, finite-data bounds on the approximation error, which are key for data-driven control with closed-loop guarantees~\cite{strasser2025overview}, depend among others on the interplay of state and control in the available data set.

In this article, we propose a framework for data fitting of control-affine mappings to ensure a desired robustness margin in the associated system identification problem.
To this end, we consider the respective regression problem and derive bounds on the minimal singular value of a matrix composed of the input data---also if bounds on the control inputs are present, see Section~\ref{sec:problem-formulation}. 

In Section~\ref{sec:excitation}, we study conditions under which the input data is \emph{exciting}. In this case, the input-dependent term in the error bound is minimized, or equivalently, the smallest singular value of the data matrix attains its maximum (upper bound). 
This furnishes a necessary optimality condition on the choice of inputs used, see Section~\ref{sec:excitation:optimality} for details.
We show that, under the necessary condition, scaling the input amplitudes by a scalar factor is sufficient for optimality, providing a direct criterion for input design. Using this, we construct inputs that achieve optimality with the fewest data points and propose optimal control inputs for the constrained regression problem.

Whereas these novel contributions are already very useful when identifying a model from scratch, their suitability to certain practical applications can be limited. 
For instance, the aforementioned contributions assume that control inputs can always be appropriately scaled if the optimum shall be attained. 
More crucially, however, it is assumed that the decision on which data to collect is made a priori and jointly for all data points. 
In many practical applications, it is instead desirable to add data sequentially to iteratively refine the system approximation. 
In Section~\ref{sec:excitation:efficient}, we provide a framework for this setting using the concept of subspace angles. 
It turns out that given a set of inputs, choosing an additional input vector so that all inputs together sum to zero, may improve the regression result significantly. Geometrically, this simple strategy centers the inputs and spreads them more symmetrically, removing any bias towards one direction in the input space and thereby improving the conditioning. %for the identification. 

In Section~\ref{sec:Koopman}, the aforementioned theoretical contributions are instantiated for bilinear EDMDc in the Koopman framework to remove potentially restrictive conditions on the data collection enabling flexible sampling. 
The presented results provide constructive, directly implementable insight due to their geometric interpretation, as we exemplarily show for physics- and application-oriented examples in Section~\ref{sec:application}. 
Furthermore, our results are also of value for kernel EDMD and generator EDMD, where the latter is used to learn continuous-time dynamics. 
Regarding kernel EDMD, we provide uniform error bounds on the approximation error in the setting from~\cite{BoldPhil24}, filling an important gap to rigorously verify all assumptions of~\cite{schimperna2025data} w.r.t.\ data-driven MPC in the Koopman framework. 
The article's structure is illustrated in Figure~\ref{fig:graphical_abstract}. 

\begin{figure}
	\centering
	\begin{tikzpicture}[>=triangle 45, scale=0.77   , transform shape]

    \def\yQuestion{-.75cm}
    \def\yResult{-0.75cm}
    \def\xResult{0cm} %1.2cm}
    \def\yDistB{1.25cm}
    \def\xDistB{1.0cm}
    \def\xA{0.25cm}
    \def\xB{.8cm}
    \def\rCirc{0.3cm}

    \definecolor{DescriptionCol}{RGB}{100,100,100}
    % \definecolor{DescriptionCol}{RGB}{255, 255, 255}
    \tikzset{
        DescpriptionBlock/.style = {fill=DescriptionCol, minimum width=5cm, minimum height=1.5cm, align=center, fill opacity=0.2, draw = black, draw opacity=1, text=black, text opacity=1},
        DescpriptionBlockB/.style = {fill=white, minimum width=4cm, %minimum height=1.5cm, 
        align=center, fill opacity=0.2, draw = black, draw opacity=1, text=black, text opacity=1},
        ResultsBlock/.style = {minimum width=5.2cm, minimum height=2.25cm, align=center, draw = blue, fill=white, text=black, text opacity=1},
        ResultsBlockB/.style = {fill=blue,  align=center, fill opacity=0.1, draw = none, minimum height=.5cm, text=black, text opacity=1},
        arrowsAbstract/.style = {->, line width =0.3mm, draw = black},
    }

    % ----------------------- DESCPRIPTION -----------------------
    % error bounds
    \node [draw = black,
        DescpriptionBlock, 
        anchor = north east,
        xshift = \xB
    ] (errorBound) 
    {Data-driven Koopman surrogate models \\
    for control-affine systems using \\ 
    extended dynamic mode decomposition (EDMD)
    % bilinear (g)EDMDc %~\cite{peitz:otto:rowley:2020, surana2016koopman,williams2016extending}%
	% \\ and kEDMD%~\cite{BoldPhil24,kohne2025error,schimperna2025data}%
	};
    % \\
    % (Prop.~\ref{}, Rem.~\ref{} %
    % \& Thm.~\ref{})};

    % dependence on input directions
    % \node [DescpriptionBlockB,
    %     anchor = north,
    %     % xshift = 4.0cm,
    %     yshift = -0.0cm,
    % ] at (errorBound.south)  (Inputs) 
    % {function of input\\choices (excitation)};

    % geometrically interpretable lower bound
    \node [DescpriptionBlock,
        anchor = north west,
        xshift = \xB,
    ] at (errorBound.north east)  (subspace) 
    {Data-fitting framework for affine-linear maps:\\
    error bounds, robustness, and optimality};
    %
    % \node [ResultsBlockB,
    %     % fill=white, draw=black,
    %     anchor = north west,
    %     align=center,
    %     % xshift = -\xA,
    % ] at (subspace.north east)  (NCO) 
    % {Lem.~\ref{lem:NCO}};
    % %
    % \node [ResultsBlockB,
    %     % fill=white, draw=black,
    %     anchor = south west,
    %     align=center,
    %     % xshift = -\xA,
    % ] at (subspace.south east)  (ToDo) 
    % {Thm.~\ref{t:main}};

    % define coordinate
    \coordinate (Sum) at ($ (errorBound.south east) + (0.45*\xB, -2.0cm) $);

    \node[text=black] at (Sum) {\Large $+$};
    % draw circle
    \draw[line width = 0.3mm, draw = black] (Sum) circle (\rCirc);

    \node [ResultsBlock,
        anchor = north east,
        xshift = -1cm,
        yshift = -.5cm,
        minimum width=6.7cm
    ] at (Sum)  (refined) 
    {Robust $L^\infty$-error bounds for \\bilinear Koopman control\\ \\};
    % kEDMD
    \node [ResultsBlockB,
        anchor = south,
        align=center,
    ] at (refined.south)  (errorskEDMD) 
    {\footnotesize kernel EDMD: Thm.~\ref{thm:control_main}};
    % bEDMDc
    \node [ResultsBlockB,
        anchor = south,
        align=center,
        % xshift = -\xA,
    ] at (errorskEDMD.north)  (errorsbEDMD) 
    {\footnotesize EDMD \& generator EDMD: 
    Rem.~\ref{rem:EDMD:operator},
    Prop.~\ref{cor:approx:error:generator}};

    % guideline
    \node [ResultsBlock,
        anchor = north west,
        xshift = 0.7cm,
        yshift = -.5cm,
        minimum width = 7.1cm,
    ] at (Sum) (guideline) 
    {Data-collection guidelines\\ for robust data fitting %choose exciting inputs
    \\ \\};
    % subspace
    \node [ResultsBlockB,
        anchor = south,
        align=center,
    ] at (guideline.south) (mainT) 
    {\footnotesize  subspace angles: Thm.~\ref{t:main}};
    % \node[anchor=south] at (mainT.north) {\footnotesize online};
    %
    \node [ResultsBlockB,
        anchor = south,
        align=center,
    ] at (mainT.north) (NCOb) 
    {\footnotesize optimality conditions: Lem.~\ref{lem:NCO}, Cor.~\ref{cor:scaling}, Prop.~\ref{prop:simplex}};
    %
    % \node [ResultsBlockB,
    %     % fill=white, draw=black,
    %     anchor = west,
    %     align=center,
    %     xshift = 0.1cm,
    % ] at (NCOb.east)  (Prop) 
    % {Prop.~\ref{prop:simplex}};
    % \node[anchor=south] at (mainT.north) {\footnotesize online};

    % Connect
    \draw[arrowsAbstract] (errorBound.south) |- ($(Sum.west) +(-\rCirc,0)$);
    \draw[arrowsAbstract] (subspace.south) |- ($(Sum.east) +(\rCirc,0)$);
    \draw[arrowsAbstract] ($(Sum.south) +(0,-\rCirc)$) |- (refined.east);
    %
    % \draw[arrowsAbstract] ($(Sum.south) +(0,-\rCirc)$) |- (guideline.west);
    \draw[arrowsAbstract] ($(subspace.south) +(0.65cm,+0.0cm)$) -| ($(guideline.north) +(0cm,0)$);

    % 
    % \coordinate (MotivationNW) at ($ (errorBound.north west) + (-0.1cm, 0.1cm) $);
    % \coordinate (MotivationSE) at ($ (Inputs.south east) + (0.9cm, -0.1cm) $);
    % \draw[dashed] (MotivationNW) rectangle (MotivationSE);
    \node[anchor=south west] at (errorBound.north west) {Motivation};
    \node[anchor= south west, rotate=90, DescriptionCol] at (errorBound.south west) {\footnotesize Sec.~\ref{subsec:Koopman}};

    %
    % \coordinate (SolutionnNE) at ($ (subspace.north east) + (.1cm, 0.1cm) $);
    % \coordinate (SolutionSW) at ($ (subspace.south west) + (-0.1cm, -0.1cm) $);
    % \draw[dashed] (SolutionnNE) rectangle (SolutionSW);
    \node[anchor=south east] at (subspace.north east) {Approach};
    \node[anchor= south west, rotate=90, , DescriptionCol] at (subspace.south west) {\footnotesize Sec.~\ref{sec:problem-formulation} \& \ref{sec:excitation}};
    % {solution approach};

\end{tikzpicture}
    % \vspace{-1em}
	\caption{Graphical abstract illustrating the motivational background %considering error bounds in bilinear (g)EDMD and kernel EDMD with control 
    and the proposed framework for data-fitting of control-affine mappings.}
	\label{fig:graphical_abstract}
\end{figure}

\textbf{Notation}: For a matrix $A \in \mathbb R^{m\times n}$, the range and kernel are denoted by $\operatorname{ran}(A)$ and $\operatorname{ker}(A)$, respectively. 
Its spectral, Frobenius, and maximum norms are denoted by $\lVert A\rVert_2$, $\lVert A\rVert_F$, and $\lVert A\rVert_\mathrm{max}:=\max_{i,j} \lvert A_{i,j}\rvert$, respectively. Note that $\lVert A\rVert_\mathrm{max}\leq \lVert A\rVert_2\leq \lVert A\rVert_F$, cf.\ \cite[p.~56]{golub2013matrix}. The Moore-Penrose inverse of $A$ is denoted by $A^\dagger$. We denote the smallest singular value of $A$ by $\sigma_\mathrm{min}(A)$. Given a symmetric matrix $A\in \mathbb R^{r\times r}$, its eigenvalues in nondecreasing order are denoted by $\lambda_1(A)\leq\dots \leq \lambda_r(A)$. Given $x,y\in\mathbb R^n$, $\lVert x\rVert$, $\lVert x\rVert_\infty$, and $\langle x,y\rangle$ denote the Euclidean norm, the maximum norm, and the standard scalar product in $\mathbb R^n$, respectively. 
Let $e_j$ be the $j$th canonical basis vector in $\mathbb R^n$ and $\mathbbm{1}_n = \begin{bmatrix}1 & \dots & 1\end{bmatrix}{}^\top \in\mathbb R^n$. The orthogonal complement of a subset $X\subset\mathbb R^n$ with respect to the standard scalar product is denoted by $X^\bot$. If $X$ is a linear subspace, then $P_X$ denotes the orthogonal projection onto X. Given integers $n,m$ with $0\leq n<m$, we define $[n:m]:=\{n, n+1,\dots, m\}$ and $[m]:=[1:m]$.

\section{Motivation and problem formulation}

This section presents an application of the results derived in Sections~\ref{sec:problem-formulation} and~\ref{sec:excitation}, and serves as a motivation for their development. Therefore, we recap the basics of Koopman theory to consider dynamical systems through the lens of observables. In particular, we highlight that control affinity is preserved for the generator of the Koopman semigroup of linear and bounded operators.
In Section~\ref{sec:Koopman}, we pick up on this application again and show an update of previous found error bounds that were first derived in~\cite{BoldPhil24}.

    \subsection{Koopman operator in dynamical control systems}
    \label{subsec:Koopman}
    In this subsection, we recap the basics of modeling nonlinear (control-affine) systems using Koopman operator theory. 
First, we consider the autonomous nonlinear dynamical system
\begin{align} \label{eq:sys:continuous:autonomous}
    \dot{x}(t) = f(x(t))
\end{align}
with (locally) Lipschitz continuous map $f:\R^n \rightarrow \R^n$ such that local existence and uniqueness of solutions $x(\cdot; \hat{x})$ for the initial value problem consisting of the differential equation~\eqref{eq:sys:continuous:autonomous} and the initial condition $x(0) = \hat{x}$ is ensured.
Assuming global existence for the time being, 
the Koopman semigroup~$(\mathcal{K}^t)_{t \geq 0}$ of bounded linear, but infinite-dimensional operators is defined by the identity
\begin{align}\label{eq:Koopman:identity}
    (\mathcal{K}^t\varphi)(\hat{x}) = \varphi(x(t;\hat{x}))
\end{align}
for all real-valued observable functions~$\varphi \in L^2(\R^n, \R)$, $t \geq 0$, and $\hat{x} \in \mathbb{R}^n$, see, e.g., \cite[pp.\ 3-33]{mauroy:mezic:susuki:2020} or~\cite[Proposition 3.4]{philipp2025error}. 
Identity~\eqref{eq:Koopman:identity} states that, instead of evaluating the observable~$\varphi$ at the solution~$x(\cdot;\hat{x})$ at time~$t$, the Koopman operator can be applied to the observable, and then the resulting function~$\mathcal{K}^t\varphi$ can be evaluated at the initial state~$\hat{x}$. 
The corresponding Koopman generator~$\mathcal{L}$ of this semigroup is defined by 
\begin{align}\label{eq:generator:limit}
    \mathcal{L}\varphi = \lim_{t \searrow 0} \frac{\mathcal{K}^t \varphi - \varphi}{t}
\end{align}
on its domain $\mathcal{D}(\mathcal{L})$, which is 
the set of observables~$\varphi$, for which the limit~\eqref{eq:generator:limit} exists w.r.t.\  
the $L^2$-norm. 
Similarly to the identity~\eqref{eq:Koopman:identity}, the Koopman generator~$\mathcal{L}$ fulfills 
the identity $\mathcal{L} \varphi = \langle \nabla \varphi, f \rangle$, which can be analogously defined for control-affine systems of the form
\begin{align}\label{eq:sys:continuous}
    \dot{x}(t) = f(x(t), u(t)) =  g_0(x(t)) + \sum\nolimits_{k = 1}^m g_k(x(t))\,u_k(t)
\end{align}
and a control function $u(t) \equiv u \in \mathbb{R}^n$ using the concept of Koopman control family~\cite{haseli2025two}, 
where the Koopman operator is 
denoted by $\mathcal{K}^t_u$, $t \in [0, \infty)$. Then, due to linearity of the scalar product, i.e., 
\begin{align*}
    (\mathcal{L}^u \varphi)(x) = \langle \nabla \varphi(x), g_0(x) \rangle + \sum\nolimits_{k = 1}^m \langle \varphi(x), g_k(x) \rangle u_k,
\end{align*}
the Koopman generator~$\mathcal{L}^u$ preserves control affinity, i.e., we have $\mathcal{L}^u = \mathcal{L}^0 + \sum\nolimits_{k = 1}^m (\mathcal{L}^k - \mathcal{L}^0) u_k$, where $\mathcal{L}^0$ and $\mathcal{L}^k$, $k \in \{1, \dots, m\}$, are the generators of the Koopman semigroups $(\mathcal{K}^t_0)_{t \geq 0}$ and $(\mathcal{K}^t_{e_k})_{t \geq 0}$ with inputs $u_0 = 0$ and $u_k = e_k$, respectively, with $e_k$ being the $k$-th unit vecto.
Hence, we only require $m+1$~Koopman generators $\mathcal{L}^{u_k}$, $k \in [0:m]$, to compute the Koopman generator~$\mathcal{L}^u$ and can, thus, ensure linear scaling w.r.t.\ the number of inputs. 
However, while theoretically advantageous, this approach might be rather demanding w.r.t.\ data collection: For each element $x_i$ 
of the set $\mathcal{X} \coloneqq \{x_1, \dots, x_d\} \subset \mathbb{X}$, where $\mathbb{X} \subset \R^n$ is compact with fill distance~$h_\mathcal{X} \coloneqq \sup_{x \in \mathbb{X}}\min_{x_i \in \mathcal{X}} \|x - x_i\|$,
one must acquire the $m+1$ samples $\{(x_i,f(x_i,u_k) \mid k \in [0:m]\}$ in order to generate a data-driven surrogate model meaning sampling precisely at~$x_i$ using the $m+1$ control values $u_0 = 0$ and $u_k = e_k$, $k \in [m]$. 
To alleviate the data collection, \cite{BoldPhil24} proposed a preparatory step to approximately infer the sought-after data from 
$\{(x_{ik},f(x_{ik},u_{ik})) \mid k \in [0:m]\}$, $i \in [d]$, where $x_{ik} \in \mathcal{B}_{r_\mathcal{X}}(x_i) = \{ x \in \mathbb{R}^n | \|x-x_i\| \leq r_{\mathcal{X}}\}$ for some radius $r_{\mathcal{X}} > 0$ and $u_k \in \mathbb{U}$ for some compact, convex set~$\mathbb{U} \subset \mathbb{R}^m$ containing the origin in its interior. 
To this end, $d$~regression problems using the input-data matrices
$$
    V_i = \begin{bmatrix}
            1 & \cdots & 1\\
            u_{i0} & \cdots & u_{id_i} 
    \end{bmatrix}, \qquad i \in [d],
$$
with $d_i \geq m$ are solved, which leads, e.g., to the  error bound
\begin{equation}\nonumber
        \| F(x,u) - F^\varepsilon(x,u)\|_\infty \leq C_1 h_\mathcal{X}^{k+1/2} + C_2(\mathcal{X)} \cdot 
        r_\mathcal{X}
        \max_{i \in [1:d]}\|V_i^\dagger\| \qquad \text{for all }\,(x,u) \in \mathbb{X} \times \mathbb{U}
\end{equation}
for the Euler discretization $F(x, u) \coloneqq x + \Delta t f(x, u)$ with time step~$\Delta t$ of the system dynamics~\eqref{eq:sys:continuous} and its kernel EDMD approximant $F^\varepsilon(x, u)$, 
fill distance $h_\mathcal{X} \leq h_0$ and $r_{\mathcal{X}} < h_{\mathcal{X}}/2$ with constants~$C_1$, $C_2(\mathcal{X})$ and $h_0 > 0$; see \cite[Thm.\ 3, Prop.\ 2]{schimperna2025data} and Section~\ref{subsec:kernel:EDMD} for an in-depth discussion.
While for a small (and fixed) number of indices, a rank condition may be sufficient to infer a 
bound on the term $\max_{i \in [1:d]}\|V_i^\dagger\|$, it is key to establish a uniform bound. 
The underlying reason is that the cardinality~$d$ of the set~$\mathcal{X}$ increases for diminishing 
fill distance, which is needed to ensure a sufficiently tight bound for the first summand of the error bound.  
The next section abstracts this problem further in order to derive a general framework, which is also applicable to suitably bound the second term of the error bound.

    \subsection{Problem formulation: affine data fitting}
    \label{sec:problem-formulation}
    We consider the local identification problem for a control-affine mapping
\begin{equation}\nonumber
    y = g_0(x) + G(x) u
\end{equation}
at a target state $x \in \mathbb{R}^n$, where 
$u \in \mathbb{U} = \{ u \in \mathbb{R}^m \mid \lVert u \rVert \leq r_u \} \subseteq \mathbb R^m$, $r_u>0$, and $y \in \mathbb{R}^{n}$ denote the control input and the output, respectively. 
The functions $g_0: \mathbb R^n\rightarrow \mathbb R^n$ and $G:\mathbb R^n\rightarrow \R^{n\times m}$ are assumed to be locally Lipschitz continuous. 
Instead of being able to measure the output~$y_j$ corresponding to a given input $u_j\in\mathbb U$, $j \in \{0,1,\ldots,\dd\}$, the observed output is subject to a bounded disturbance, i.e.,
\begin{equation}\label{eq:disturbance}
    y_j = g_0(x) + G(x)u_j + \varepsilon_j = \begin{bmatrix}
        g_0(x) & G(x)
    \end{bmatrix} \begin{bmatrix}
        1 \\ u_j
    \end{bmatrix} + \varepsilon_j
\end{equation}
with disturbance $\varepsilon_j \in \mathbb{W} := \{ \varepsilon\in \mathbb{R}^{n} \mid \lVert \varepsilon \rVert \leq r_\varepsilon \}$ for some $r_\varepsilon > 0$, e.g., due to measurement noise. 
Another source of corruption may arise from the fact that consecutive measurements for different control inputs may not be taken at exactly the same state~$x$, but rather at $x_j = x+\delta_j$, resulting in 
%\begin{equation}\nonumber
    $y_j = g_0(x_j) + G(x_j)u_j$,
%\end{equation}
where \eqref{eq:disturbance} holds with $\varepsilon_j = g_0(x_j) - g_0(x) + \bigl(G(x_j) - G(x)\bigr) u_j$. 
If the state deviation is bounded by $\lVert \delta_j \rVert = \lVert x_j - x \rVert \leq r_x$, then $\lVert \varepsilon_j\rVert \leq r_\varepsilon$ holds with $r_\varepsilon := (L_{g_0} + L_{G}r_u)r_x$, where $L_{g_0} \geq 0$ and $L_G \geq 0$ are the Lipschitz constants of~$g_0$ and~$G$ in near~$x$.
\begin{supplementbox}
    \textbf{Affine-linear data fitting: regression problem}. Let $x \in \mathbb{R}^n$ and data pairs $(y_j,u_j)_{j=0}^{\dd}$ satisfying~\eqref{eq:disturbance} be given. 
    Then, the estimators~$\hat{g}_0^\star$ of $g_0(x)$ and~$\widehat{G}^\star$ of~$G(x)$ are given by the solution of the regression problem
    \begin{equation}\label{eq:lsq}
        \operatorname*{minimize}_{[\begin{smallmatrix}
            \hat{g}_0 & \widehat{G}
        \end{smallmatrix}] \in \mathbb R^{n\times (m+1)}}\ \left\lVert Y - \begin{bmatrix}
            \hat{g}_0 & \widehat{G}
        \end{bmatrix} V\right\rVert_F
    \end{equation}
    with data matrices $Y := [y_0\ y_1\ \cdots\, y_{\dd}] \in \mathbb{R}^{n \times (\dd+1)}$ and $V \in \mathbb{R}^{(m+1)\times (\dd+1)}$ defined by
    \begin{equation}\label{eq:V}
        V := \begin{bmatrix}
            \one_{\dd+1}^\top \\ U
        \end{bmatrix} = \begin{bmatrix}
            1 & \cdots & 1\\
            u_0 & \cdots & u_{\dd} 
    \end{bmatrix}.
    \end{equation}
\end{supplementbox}

The norm of the residual can be bounded in terms of the smallest singular value $\sigma_{\min}(V)$ of $V$ and the number of measurements as shown in Proposition~\ref{prop:approx:error}.
\begin{proposition}[Error bound]\label{prop:approx:error}
    Let $x\in\mathbb R^n$ and data pairs $(y_j, u_j)_{j=0}^{\dd}$ satisfying~\eqref{eq:disturbance} be given. 
    Then, if the matrix~$V$ defined by~\eqref{eq:V} has full row rank, the solution $[\hat{g}_0^\star\ \widehat{G}^\star]$ of the regression problem~\eqref{eq:lsq} satisfies the error bound
    \begin{equation}\label{eq:bound:approx_Prop1}
        \left\lVert  \begin{bmatrix}
            g_0(x) & G(x)
        \end{bmatrix} - \begin{bmatrix}
            \hat{g}_0^\star & \widehat{G}^\star \end{bmatrix}\right\rVert  _\mathrm{max} \leq r_\varepsilon  \frac{\sqrt{\dd+1}}{\sigma_{\min}(V)},
    \end{equation}
    where $\sigma_{\min}(V)$ denotes the smallest singular value of the matrix~$V$ satisfying the upper bound $\sigma_\mathrm{min}(V) \leq \sqrt{\dd+1}$.
    If, in addition, input constraints $\lVert u_{i} \rVert = \lVert U e_{i+1} \rVert \leq r_u$, $i \in \{0,1,\ldots,\dd\}$, are present with $r_u < \infty$, the upper bound is given by 
    \begin{equation}\label{eq:bound:control-constraints}
        \sigma_\mathrm{min}(V) \leq \min\left\lbrace\sqrt{\dd+1}, r_u \sqrt{\frac{\dd+1}{m}}\right\rbrace.    
    \end{equation}
\end{proposition}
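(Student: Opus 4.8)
The plan is to exploit the closed-form solution of the least-squares problem~\eqref{eq:lsq} and then estimate the two singular-value bounds separately. Write $A := [g_0(x)\ G(x)]$, $\widehat A := [\hat g_0^\star\ \widehat G^\star]$, and collect the disturbances in $E := [\varepsilon_0\ \cdots\ \varepsilon_d] \in \mathbb R^{n\times(d+1)}$, so that~\eqref{eq:disturbance} reads $Y = AV + E$. Since $V$ has full row rank, $VV^\top$ is invertible, the minimizer of~\eqref{eq:lsq} is $\widehat A = YV^\dagger$ with $V^\dagger = V^\top(VV^\top)^{-1}$, and $VV^\dagger = I_{m+1}$. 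First I would substitute $Y = AV + E$ to obtain $\widehat A = A + EV^\dagger$, hence $A - \widehat A = -EV^\dagger$; the whole error bound thus reduces to estimating $\lVert EV^\dagger\rVert_\mathrm{max}$.

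For the max-norm, I fix indices $i$ and $k$ and expand the $(i,k)$-entry $(EV^\dagger)_{ik} = \sum_{j=0}^d (\varepsilon_j)_i (V^\dagger)_{jk}$. Applying Cauchy--Schwarz in the summation index $j$ gives $\lvert(EV^\dagger)_{ik}\rvert \le (\sum_j (\varepsilon_j)_i^2)^{1/2}\,\lVert V^\dagger e_k\rVert$. The first factor is at most $\sqrt{d+1}\,r_\varepsilon$ because $(\varepsilon_j)_i^2 \le \lVert\varepsilon_j\rVert^2 \le r_\varepsilon^2$ for every $j$, while the second factor obeys $\lVert V^\dagger e_k\rVert \le \lVert V^\dagger\rVert_2 = \sigma_\mathrm{min}(V)^{-1}$, using that the largest singular value of $V^\dagger$ equals the reciprocal of the smallest (nonzero) singular value of the full-row-rank matrix $V$. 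Taking the maximum over $i,k$ yields~\eqref{eq:bound:approx_Prop1}.

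For the unconstrained upper bound I would use the variational characterization $\sigma_\mathrm{min}(V)^2 = \min_{\lVert w\rVert = 1} \lVert V^\top w\rVert^2 = \lambda_1(VV^\top)$ and simply test it against $w = e_1 \in \mathbb R^{m+1}$. Since the first row of $V$ is $\one_{d+1}^\top$, one has $V^\top e_1 = \one_{d+1}$ and $\lVert V^\top e_1\rVert^2 = d+1$, so $\sigma_\mathrm{min}(V) \le \sqrt{d+1}$.

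The genuinely new step, and the one I expect to be the main obstacle, is the constrained bound $\sigma_\mathrm{min}(V) \le r_u\sqrt{(d+1)/m}$, because a single test vector cannot capture the gain from the input constraints. Here I would instead control the \emph{sum} of the $m$ smallest eigenvalues of $M := VV^\top \in \mathbb R^{(m+1)\times(m+1)}$ via the Ky Fan trace minimization $\sum_{i=1}^m \lambda_i(M) = \min_{W^\top W = I_m} \operatorname{tr}(W^\top M W)$. Evaluating the trace at the admissible frame $W_0 = [e_2\ \cdots\ e_{m+1}]$, which spans the orthogonal complement of the constant direction $e_1$, gives $\operatorname{tr}(W_0^\top M W_0) = \lVert U\rVert_F^2 = \sum_{j=0}^d \lVert u_j\rVert^2 \le (d+1)r_u^2$, whence $\sum_{i=1}^m \lambda_i(M) \le (d+1)r_u^2$. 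Since $\lambda_1(M)$ is the smallest of these $m$ eigenvalues, it is bounded by their average, $\lambda_1(M) \le \tfrac1m\sum_{i=1}^m\lambda_i(M) \le (d+1)r_u^2/m$, and taking square roots gives $\sigma_\mathrm{min}(V) \le r_u\sqrt{(d+1)/m}$. Combining this with the unconstrained bound produces the minimum in~\eqref{eq:bound:control-constraints}. The points to verify carefully are that $W_0$ really isolates the $U$-block of $M$ in the trace and that the averaging step is legitimate, i.e.\ that we are bounding the smallest eigenvalue and not the largest.
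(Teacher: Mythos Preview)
Your proof is correct and follows essentially the same structure as the paper's argument: identify $\widehat A - A = EV^\dagger$ and bound it, then derive the two singular-value bounds. There are only minor technical differences worth noting. For the error bound, you use an entrywise Cauchy--Schwarz estimate, whereas the paper goes through the chain $\lVert EV^\dagger\rVert_\mathrm{max}\le\lVert EV^\dagger\rVert_2\le\lVert E\rVert_F\lVert V^\dagger\rVert_2$; both give the same result. For $\sigma_\mathrm{min}(V)\le\sqrt{d+1}$, your test-vector argument with $w=e_1$ is more direct than the paper's, which sets up the full SVD of $V$ (that SVD is, however, reused later in Lemma~\ref{lem:NCO}). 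For the constrained bound, the paper first invokes Cauchy's interlacing theorem to obtain $\lambda_1(VV^\top)\le\lambda_1(UU^\top)$ and then averages the singular values of $U$; your Ky Fan argument with the frame $W_0=[e_2\ \cdots\ e_{m+1}]$ achieves the same inequality $\lambda_1(VV^\top)\le\tfrac1m\lVert U\rVert_F^2$ in one step, since $W_0^\top VV^\top W_0 = UU^\top$ is precisely the principal submatrix that Cauchy interlacing isolates. Both routes are equally rigorous and of comparable length.
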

\begin{proof}
    Recall that the unique solution of the least-square regression problem~\eqref{eq:lsq} satisfies $\left[ \hat{g}_0^\star\  \widehat{G}^\star \right] = Y V^\dagger$,
        where $V^\dagger = V^\top (V V^\top)^{-1}$ is the Moore--Penrose inverse of the matrix~$V$. 
    In particular, $VV^\dagger = I_{m+1}$ holds and the norm $\lVert V^\dagger\rVert_2$ is the reciprocal of $\sigma_{\min}(V)$. 
    
    Defining $E:=[\varepsilon_0\ \varepsilon_1\ \cdots\ \varepsilon_{\dd}]$, one finds
    \begin{equation*}
        \begin{bmatrix}
            \hat{g}_0^\star & G^\star
        \end{bmatrix} - \begin{bmatrix}
            g_0(x) & G(x)
        \end{bmatrix} = 
            Y V^\dagger
         - \begin{bmatrix}
            g_0(x) & G(x)
        \end{bmatrix} VV^\dagger = E V^\dagger.
    \end{equation*}
    The assertion follows with
    \begin{equation*}
        \lVert E V^\dagger\rVert_\mathrm{max}\leq\lVert E V^\dagger\rVert_2 \leq \Vert E\rVert_2 \lVert V^\dagger\rVert_2 \leq \Vert E\rVert_F \lVert V^\dagger\rVert_2
            \leq r_\varepsilon\frac{\sqrt{\dd+1}}{\sigma_{\min}(V)}.
    \end{equation*}
    To show the assertion w.r.t.\ the upper bounds on~$\sigma_{\min}(V)$, consider the singular value decomposition $V=Q^\top \Sigma P$, where $Q=\begin{bmatrix}
        q_1,\dots, q_{m+1}
    \end{bmatrix}\in\mathbb R^{(m+1)\times(m+1)}$, $P\in\mathbb R^{(\dd+1)\times(\dd+1)}$ are orthogonal matrices, and $\Sigma = \begin{bmatrix}
        \operatorname{diag}(\sigma_{1},\dots, \sigma_{m+1}) & 0_{(m+1)\times (\dd-m)}
    \end{bmatrix}$. Therein, $(\sigma_j)_{j=1}^{m+1}$ are the singular values of $V$ in descending order. 
    From the structure of $V$ it is evident that 
    \begin{equation}\label{eq:q1}
        q_1^\top \Sigma P = \one_{\dd+1}^\top.
    \end{equation} The orthogonality of $P$ and $Q$ yields $\sqrt{\dd+1} = \lVert \one_{\dd+1} \rVert = \lVert P^\top \Sigma^\top q_1\rVert = \lVert \Sigma^\top q_1\rVert$ and $\sigma_\mathrm{min}(V)=\sigma_{m+1} \leq \lVert \Sigma^\top q_1\rVert \leq \sigma_1$, showing the claimed upper bound on $\sigma_\mathrm{min}(V)$.
    If, in addition, control constraints are present, Cauchy's interlacing property, see \cite[Theorem~4.3.28]{horn2012matrix} yields
    \begin{equation}
        \sigma^2_\mathrm{min}(V)=\lambda_1(VV^\top) \leq \lambda_1(UU^\top)=\sigma^2_\mathrm{min}(U).\label{eq:cauchy_interlacing}
    \end{equation}
    Moreover,
    \begin{equation}
        \sigma^2_\mathrm{min}(U) \leq \frac{1}{m}\sum_{i=1}^{m} \sigma_i^2(U) =  \frac{1}{m} \lVert U\rVert_F^2 = \frac{1}{m} \sum_{j=1}^{\dd+1} \lVert Ue_j\rVert^2 \leq r_u^2 \cdot \frac{\dd+1}{m}.
    \end{equation}
    Together with~\eqref{eq:cauchy_interlacing} and $\sigma_\mathrm{min}(V)\leq \sqrt{\dd+1}$ this implies \eqref{eq:bound:control-constraints}.
\end{proof}

Note that the rank condition on $V$ implies $\dd \geq m$. 
Finding an a priori bound on $\sqrt{\dd+1}/\sigma_{\min}(V)$ depending on the control inputs is far from obvious. 
\begin{remark}
    In \cite[Remark 4.5 (a)]{BoldPhil24}, a probabilistic bound is derived for the case of inputs $u_0,u_1,\dots, u_{\dd}$ drawn independently, uniformly over a hypercube $[-R,R]^m$ with $R>0$, showing that the probability of $\sqrt{\dd+1}/\sigma_{\min}(V)$ being large decays exponentially as $\dd$ increases, see also Figure~\ref{fig:random}. Under the additional assumption $R\geq \sqrt{3}$, in the given scenario one can infer that $\sqrt{\dd+1}/\sigma_{\min}(V)$ converges in probability to the lower bound $1$ as the number of input samples increases, i.e., as $\dd \rightarrow \infty$.
\end{remark}
% \textcolor{red}{Karl @ Lea: Bitte $d$ aufsteigend sortieren und meine Umbenennung zwecks Konsistenz mit~\eqref{eq:V} in der Bildunterschrift beachten. Dazu bitte einheitliche Achsen und, im Idealfall, die Ordinate nur einmal beschriften. Ggf.\ wäre $r_u = 0.5$ besser zur Veranschaulichung.}
\begin{figure}[htb]
    \centering
    \includegraphics[width=0.475\linewidth]{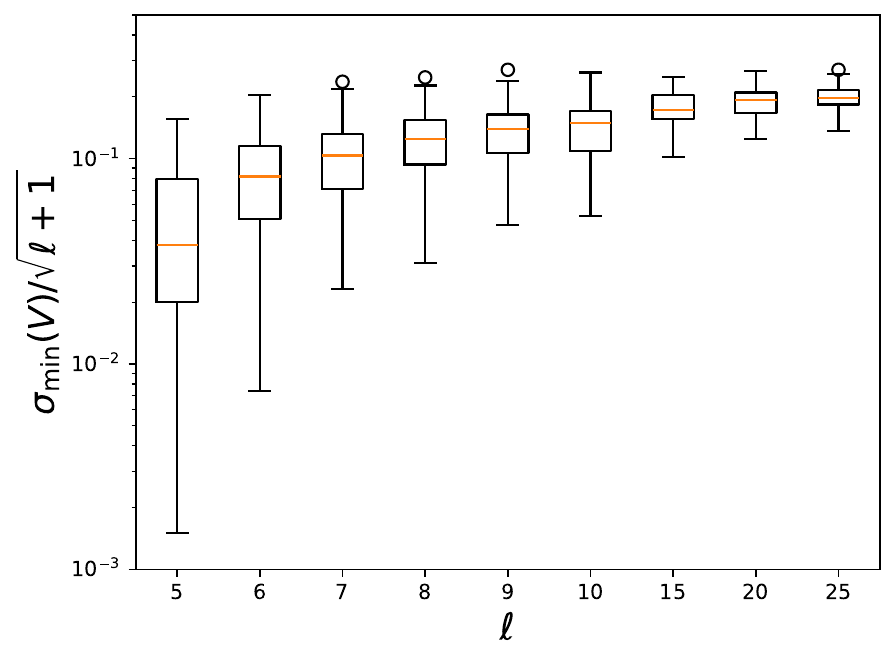}
    \includegraphics[width=0.475\linewidth]{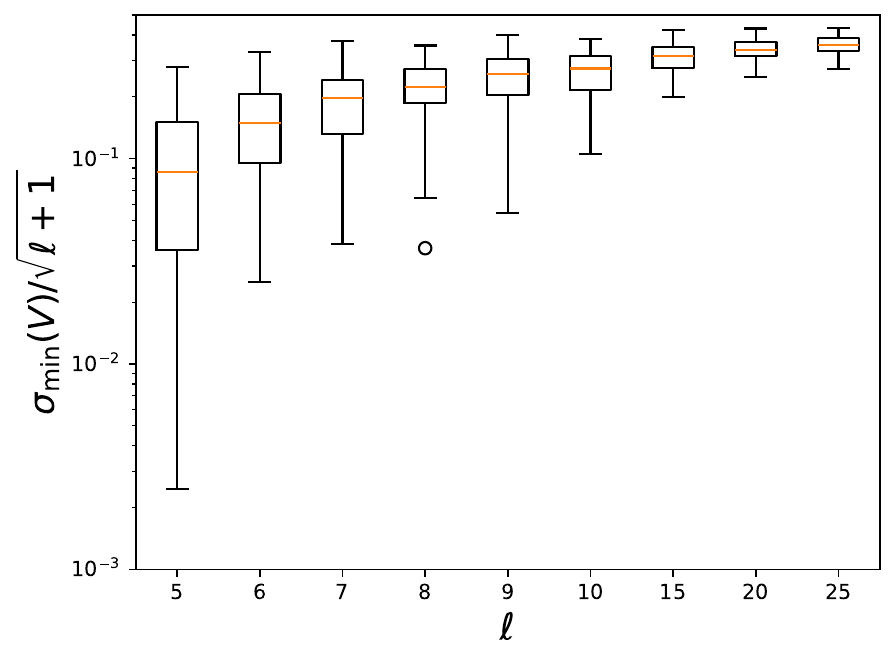}
    \vspace{-1em}
    \caption{Box plots of $\frac{1}{\sqrt{\dd}} \sigma_\mathrm{min}(V)$ for $m = 4$ and $\ell \in \{5, 6, 7, 8, 9, 10, 15, 20, 25\}$ with $u_i$ drawn i.i.d.\ and uniformly from the set $[-0.5, 0.5]^4$ without (left) and with normalization ($\|u_i\| = 1$; right).}
    \label{fig:random}
\end{figure}

\begin{remark}
    In statistics, regression problems of the form~\eqref{eq:disturbance} typically occur in parameter estimations for (affine) linear statistical models, where the errors $\varepsilon_j$ are modeled as centered random variables. In this context $V=\left[\begin{smallmatrix} 1 & \dots & 1\\ u_0 & \dots & u_{\dd}\end{smallmatrix}\right]$ is called the design matrix, and the parameters to be estimated are the intercept and the slope (in our setting $g_0(x)$ and $G(x)$). If the errors are uncorrelated and of equal variance, the Gauss--Markov theorem implies that the best linear unbiased parameter estimator is given by $YV^\dagger$, where $Y=\begin{bmatrix}
        y_0&\dots & y_{\dd}
    \end{bmatrix}$. The structure of $V$, which depends on the choice of the vectors $u_0,\dots, u_{\dd}$, has a strong influence on the quality of the parameter estimation, and different optimality criteria regarding experimental design have been proposed, see, e.g., \cite{pukelsheim2006optimal}. Three classical criteria are D-optimality, which maximizes $\det(VV^\top)$ and minimizes the volume of the confidence ellipsoid, A-optimality, which minimizes the trace $\operatorname{tr}((VV^\top)^{-1})$ and maximizes the average precision, and E-optimality, which maximizes $\sigma_\mathrm{min}(V)$ and provides the best worst-case bound on the estimation error. Among these, D-optimality is by far the most prominent in the literature, while E-optimality has received less attention, and existing results like~\cite{pukelsheim1993optimal,heiligers1994optimal} are restricted to scalar data points $u_j$. To the best of our knowledge, explicit results on E-optimal design like in our setting, see Section~\ref{sec:excitation}, are not available in the literature.
\end{remark}

    \section{On excitation of control-affine systems} 
    \label{sec:excitation}
    In Proposition~\ref{prop:approx:error}, the upper bound consists of two parts. 
On the one hand, it depends on the upper bound~$r_\varepsilon$ on the disturbance resulting from \textit{noisy data}. 
On the other hand, it depends on the quotient $\sqrt{\dd+1} / \sigma_{\operatorname{min}}(V)$. 
In this section, we focus on this quotient or, to be more precise, on its denominator~$\sigma_{\operatorname{min}}(V)$. We develop techniques to generate \emph{exciting} data so that the quotient approaches its lower bound of one. 

In Subsection~\ref{sec:excitation:optimality}, we provide a necessary optimality condition and show that, then, sufficient scaling of the inputs~$u_0,u_1,\ldots,u_{\dd}$ is sufficient to ensure that the lower bound of the quotient $\sqrt{\dd+1} / \sigma_{\operatorname{min}}(V)$ is attained. 
Furthermore, we tighten the upper bound on the minimal singular value $\sigma_{\operatorname{min}}(V)$ if the input and, thus, a potential scaling is constrained, i.e., $r_u < \infty$ holds.

In Subsection~\ref{sec:excitation:efficient}, we mainly focus on the case $\dd = m$ meaning that we are interested in \textit{exciting} inputs. This is of particular interest in many applications when either data collection is expensive, e.g., each data pair corresponds to a costly numerical simulation or even a real-world experiment, or sequentially collected data as, e.g., required in active learning.

\subsection{Necessary and sufficient optimality conditions}
\label{sec:excitation:optimality}

In the following lem\-ma, we provide a necessary condition of optimality~\eqref{eq:NCO}, which will turn out to be very helpful in deriving sufficient conditions and, thus, to ensure that the optimum $\sigma_\mathrm{min}(V) = \sqrt{\dd+1}$ is attained.
\begin{lemma}%[Necessary optimality condition]
\label{lem:NCO}
    Let $\dd \geq m \geq 1$ and suppose $U\in \R^{m\times (\dd+1)}$ has full row rank. 
    Then, if the smallest singular value~$\sigma_\mathrm{min}(V)$ of the matrix~$V$ defined by~\eqref{eq:V} attains its maximum, i.e., $\sigma_\mathrm{min}(V) = \sqrt{\dd+1}$, we have
    \begin{equation}\label{eq:NCO}\tag{NCO}
        U \one_{\dd+1} = 0.
    \end{equation}
\end{lemma}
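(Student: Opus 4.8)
The plan is to work with the Gram matrix $VV^\top \in \R^{(m+1)\times(m+1)}$ rather than with $V$ directly, since $\sigma_\mathrm{min}(V)^2$ equals the smallest eigenvalue $\lambda_1(VV^\top)$ of this symmetric positive semidefinite matrix, exactly as used in~\eqref{eq:cauchy_interlacing}. From the block structure of $V$ in~\eqref{eq:V} one computes
\[
    VV^\top = \begin{bmatrix} d+1 & (U\one_{d+1})^\top \\ U\one_{d+1} & UU^\top \end{bmatrix},
\]
so that the quantity $U\one_{d+1}$ we wish to force to zero appears precisely as the off-diagonal block coupling the first coordinate to the remaining $m$ ones. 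The strategy is thus to show that, under the optimality hypothesis, this coupling block must vanish.

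First I would invoke Proposition~\ref{prop:approx:error}, which gives $\sigma_\mathrm{min}(V)\le\sqrt{d+1}$ and hence $\lambda_1(VV^\top)\le d+1$; the hypothesis $\sigma_\mathrm{min}(V)=\sqrt{d+1}$ upgrades this to the equality $\lambda_1(VV^\top)=d+1$. The key observation is then that the top-left entry of $VV^\top$ equals $d+1$ as well, i.e.\ the Rayleigh quotient of $VV^\top$ at the unit vector $e_1$ already attains the minimal eigenvalue:
\[
    e_1^\top VV^\top e_1 = d+1 = \lambda_1(VV^\top).
\]

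The main step, and the only point I expect to require any care, is to conclude from this that $e_1$ is an eigenvector of $VV^\top$ for the eigenvalue $d+1$. This is the standard variational fact that a unit vector realizing the smallest Rayleigh quotient is a bottom eigenvector: expanding $e_1=\sum_k c_k v_k$ in an orthonormal eigenbasis $(v_k)$ of $VV^\top$ with eigenvalues $\lambda_1\le\cdots\le\lambda_{m+1}$ yields $e_1^\top VV^\top e_1=\sum_k \lambda_k c_k^2 \ge \lambda_1\sum_k c_k^2=\lambda_1$, and equality forces $c_k=0$ whenever $\lambda_k>\lambda_1$, so $e_1$ lies in the eigenspace of $\lambda_1=d+1$. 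Everything up to here is routine linear algebra; the substance is simply recognizing that the trivial lower bound $\sqrt{d+1}$ is attained by the canonical direction $e_1$.

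Finally, $VV^\top e_1=(d+1)e_1$ is read off block-wise: the top entry holds automatically, while the lower $m$ rows give exactly $U\one_{d+1}=0$, which is~\eqref{eq:NCO}. I note that the full-row-rank assumption on $U$ plays no role in this particular implication; it is a standing nondegeneracy hypothesis ensuring that $V$ can be of full row rank and that the regression problem, hence the optimum $\sigma_\mathrm{min}(V)=\sqrt{d+1}$, is meaningful.
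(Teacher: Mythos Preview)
Your proof is correct and rests on the same variational idea as the paper's, namely that the vector associated with the row $\one_{d+1}^\top$ must lie in the eigenspace of the Gram matrix for the bottom eigenvalue $d+1$, after which the conclusion $U\one_{d+1}=0$ drops out. The execution, however, differs: the paper sets up the SVD $V=Q^\top\Sigma P$, argues by contradiction (projecting $q_1$ onto $\ker(\Sigma\Sigma^\top-(d+1)I)$) that $q_1$ is an eigenvector of $\Sigma\Sigma^\top$, and then evaluates $V\one_{d+1}$ via the identity $P\one_{d+1}=\Sigma^\top q_1$. Your route bypasses the SVD entirely by writing down the block form of $VV^\top$, observing that $e_1^\top VV^\top e_1=d+1$ already realizes the minimum Rayleigh quotient, and reading off $U\one_{d+1}=0$ from $VV^\top e_1=(d+1)e_1$. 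This is more elementary and transparent; the paper's SVD detour buys nothing here beyond reusing notation from the proof of Proposition~\ref{prop:approx:error}.
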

\begin{proof}
    Consider the singular value decomposition $V = Q^\top\Sigma P$ as in the proof of Proposition~\ref{prop:approx:error} with $(\sigma_j)_{j=1}^{m+1}$ being the singular values of $V$ in descending order and $q_1$ denoting the first column of the orthogonal matrix $Q$. Let us assume that $\sqrt{\dd+1}=\sigma_{m+1}=\sigma_\mathrm{min}(V)$ holds, i.e., $\sqrt{\dd+1}$ is the smallest eigenvalue of $\Sigma\Sigma^\top$. We show that $q_1\in K:=\operatorname{ker}\!\left(\Sigma\Sigma^\top-(\dd+1) I_{m+1}\right)$. 
    
    Assume the contrary, that is, $(I-P_K)q_1\neq 0$, where $P_K$ is the orthogonal projection onto $K$. Then, using $1=\lVert q_1\rVert^2 = \lVert P_K q_1 \rVert^2 + \lVert (I-P_K)q_1\rVert^2$ and $\lVert \Sigma^\top q_1\rVert^2 = \dd+1$, cf.\ the proof of Proposition~\ref{prop:approx:error}, we get
    \begin{equation*}
        1 <  \frac{\lVert \Sigma^\top P_K q_1\rVert^2 + \lVert \Sigma^\top (I-P_K)q_1\rVert^2}{\dd+1} = \frac{\lVert \Sigma^\top q_1\rVert^2}{\dd+1} = 1;
    \end{equation*}
    a contradiction. Therefore, $\Sigma\Sigma^\top q_1 = (\dd+1) q_1$. By \eqref{eq:q1}, $P\one_{\dd+1} = \Sigma^\top q_1$ and
    \begin{equation}\nonumber
        V\one_{\dd+1} = Q^\top \Sigma P\one_{\dd+1} = Q^\top \Sigma\Sigma^\top q_1 = (\dd+1) Q^\top q_1 = \begin{bmatrix}
            1\\ 0_m
        \end{bmatrix}
    \end{equation}
    implying $U\one_{\dd+1}=0$.
\end{proof}

The next proposition shows that a proper scaling of the input matrix~$U$ attains the upper bound $\sigma_\mathrm{min}(V) = \sqrt{\dd+1}$, achieving \textit{optimal excitation}. 
\begin{proposition}[Excitation by scaling]\label{prop:SCO}
    Let $\dd \geq m \geq 1$ and suppose $U\in \R^{m\times (\dd+1)}$ has full row rank. Let Condition~\eqref{eq:NCO} hold, i.e., $U \one_{\dd+1}=0$, then for every scaling factor $\alpha \geq \frac{\sqrt{\dd+1}}{\sigma_\mathrm{min}(U)}$ the scaled matrix
    \begin{equation}\label{eq:Valpha}
        V_\alpha = \begin{bmatrix}
            \one_{\dd+1}^\top\\ \alpha U
        \end{bmatrix}
    \end{equation}
    satisfies $\sigma_\mathrm{min}(V_\alpha)=\sqrt{\dd+1}$.
\end{proposition}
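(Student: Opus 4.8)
The plan is to pass to the Gram matrix $V_\alpha V_\alpha^\top \in \R^{(m+1)\times(m+1)}$, using $\sigma_\mathrm{min}(V_\alpha) = \sqrt{\lambda_1(V_\alpha V_\alpha^\top)}$, and to exploit the hypothesis \eqref{eq:NCO} to expose a block-diagonal structure. Writing out the product block by block, I would obtain
\begin{equation*}
    V_\alpha V_\alpha^\top = \begin{bmatrix}
        \one_{d+1}^\top \one_{d+1} & \alpha\, \one_{d+1}^\top U^\top \\[2pt]
        \alpha\, U \one_{d+1} & \alpha^2\, U U^\top
    \end{bmatrix}.
\end{equation*}
The top-left entry is $\one_{d+1}^\top \one_{d+1} = d+1$, and the crucial point is that the off-diagonal blocks vanish: by \eqref{eq:NCO} we have $U\one_{d+1} = 0$, hence also $\one_{d+1}^\top U^\top = (U\one_{d+1})^\top = 0$. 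This is the single place where the necessary condition enters, and it is essentially the whole content of the argument.

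Consequently the Gram matrix decouples as
\begin{equation*}
    V_\alpha V_\alpha^\top = \begin{bmatrix}
        d+1 & 0 \\ 0 & \alpha^2 U U^\top
    \end{bmatrix},
\end{equation*}
so its spectrum is the union of $\{d+1\}$ with the spectrum of $\alpha^2 U U^\top$, namely $\{\alpha^2 \sigma_i^2(U)\}_{i=1}^m$. Since $U$ has full row rank, $\sigma_\mathrm{min}(U) > 0$, so the scaling factor is well defined and the smallest eigenvalue is
\begin{equation*}
    \lambda_1(V_\alpha V_\alpha^\top) = \min\bigl\{\, d+1,\ \alpha^2 \sigma_\mathrm{min}^2(U) \,\bigr\}.
\end{equation*}

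Finally I would substitute the assumed lower bound on $\alpha$. The condition $\alpha \geq \sqrt{d+1}/\sigma_\mathrm{min}(U)$ is equivalent to $\alpha^2 \sigma_\mathrm{min}^2(U) \geq d+1$, so the minimum above equals $d+1$ and therefore $\sigma_\mathrm{min}(V_\alpha) = \sqrt{d+1}$. This also matches the upper bound $\sigma_\mathrm{min}(V_\alpha) \leq \sqrt{d+1}$ from Proposition~\ref{prop:approx:error}, confirming that the value is maximal. I do not expect a genuine obstacle here: once \eqref{eq:NCO} is used to kill the cross terms, the eigenvalues can be read off directly, and the only thing to check is that the prescribed threshold on $\alpha$ is precisely what lifts $\alpha^2\sigma_\mathrm{min}^2(U)$ above the fixed value $d+1$.
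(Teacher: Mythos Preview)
Your proof is correct and follows essentially the same approach as the paper: compute $V_\alpha V_\alpha^\top$, use \eqref{eq:NCO} to obtain the block-diagonal form $\operatorname{diag}(d+1,\,\alpha^2 UU^\top)$, and read off $\sigma_\mathrm{min}(V_\alpha) = \min\{\sqrt{d+1},\,\alpha\,\sigma_\mathrm{min}(U)\}$. The paper is slightly more terse, but the argument is identical.
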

\begin{proof}
    Suppose $U\one_{\dd+1}=0$. Then $V_\alpha$ in \eqref{eq:Valpha} satisfies
    \begin{equation}
        V_\alpha V_\alpha^\top = \begin{bmatrix}
            \dd+1 & 0\\
            0 & \alpha^2 UU^\top
        \end{bmatrix}.
    \end{equation}
    Therefore, $\sigma_\mathrm{min}(V_\alpha) = \min\{\sqrt{\dd+1}, \alpha\sigma_\mathrm{min}(U)\}$ which proves the claim.
\end{proof}
Next, we show a corollary fully exploiting the results of Proposition~\ref{prop:SCO} to construct an optimal input; essentially only using the $m+1$ inputs $u_0,u_1,\ldots,u_m$, see Figure~\ref{fig:scaling} for an illustration.
\begin{figure*}[htb]
    \centering
    % \begin{minipage}[b]{0.6\textwidth}
    %     \centering
    %     \input{figTikz/sphere_orthonormal}
    % \end{minipage}
    % \hfill
    \begin{minipage}[b]{0.49\textwidth}
        \centering
        \input{figTikz/sphere_header_orthonormal}
        %% 2d picture
\begin{tikzpicture}[, >=triangle 45]
    % draw circle
    \draw[black, fill=none, very thick] (0,0) circle (\Rcut) ;
    \draw[dashed, very thin] (0,-\Rcut) -- (0,0);
    \draw[very thin, ->] (0,0) -- (0,1.2*\Rcut) node[pos=1.0, right, anchor=west] {$u^{(2)}$};
    \draw[dashed, very thin] (-\Rcut,0) -- (0,0);
    \draw[very thin, ->] (0,0) -- (1.2*\Rcut,0) node[pos=1.0, below, anchor=north] {$u^{(1)}$};

    % draw the angles
    \def\rArc{0.4}
    \begin{scope}[rotate = \alphaA]
        % \draw[gray,<->] (\rArc,0) arc[start angle=0, end angle={\alphaB-\alphaA}, radius=\rArc] node[pos=0.5, right]{$\theta_1$};
        \draw[gray] (\rArc,0) arc[start angle=0, end angle={\alphaB-\alphaA}, radius=\rArc]; % node[pos=0.5, right]{$\theta_1$};
        \draw[gray, fill=gray] (0.4*\rArc,0.4*\rArc) circle (0.02) ;
    \end{scope}
    % \begin{scope}[rotate = \alphaB]
    %     \draw[gray,<->] (\rArc,0) arc[start angle=0, end angle={\alphaC-\alphaB}, radius=\rArc] node[pos=0.5, above, xshift=-0.2cm]{$\theta_2$};
    % \end{scope}
    % % may hardcoded!
    % \begin{scope}[rotate = \alphaA]
    %     \draw[gray,<->] (\rArc,0) arc[start angle=0, end angle={-135}, radius=\rArc] node[pos=0.6, below]{$\theta_0$};
    % \end{scope}

    % draw the vectors
    \begin{scope}[rotate = \alphaA]
        \draw [->,thick, blue] (0,0) -- (\Rcut,0) node [pos=1.0, right] {$u_1$};
    \end{scope}
    \begin{scope}[rotate = \alphaB]
        \draw [->,thick, red] (0,0) -- (\Rcut,0) node [pos=1.0, above] {$u_2$};
    \end{scope}
    % \begin{scope}[rotate = \alphaC]
        \draw [->,thick, orange] (0,0) -- (\uNx,\uNy) node [pos=1.0, below] {$u_0$};
    % \end{scope}

    % draw the center
    \def\lc{0.2}
    \draw[thin] (0,-0.5*\lc) -- ++(0, \lc);
    \draw[thin] (-0.5*\lc, 0) -- ++(\lc, 0);

    \node at (0,1.6*\Rcut) {Corollary~\ref{cor:scaling}};
    \node[anchor=west] at (0.2, -1.9) {$\alpha$-sphere in $\mathbb{R}^2$};

    % phantom to shift the picuture
    \phantom{
        \node at (0,-2) {};
    }
    
\end{tikzpicture}
    \end{minipage}
    % \hfill
    \begin{minipage}[b]{0.49\textwidth}
        \centering
        \input{figTikz/sphere_header_simplex}
        %% 2d picture
\begin{tikzpicture}[, >=triangle 45]
    % draw circle
    \draw[black, fill=none, very thick] (0,0) circle (\Rcut) ;
    \draw[dashed, very thin] (0,-\Rcut) -- (0,0);
    \draw[very thin, ->] (0,0) -- (0,1.2*\Rcut) node[pos=1.0, right, anchor=west] {$u^{(2)}$};
    \draw[dashed, very thin] (-\Rcut,0) -- (0,0);
    \draw[very thin, ->] (0,0) -- (1.2*\Rcut,0) node[pos=1.0, above, anchor=south] {$u^{(1)}$};

    % draw the angles
    \def\rArc{0.75}
    \begin{scope}[rotate = \alphaA]
        \draw[gray,<->] (\rArc,0) arc[start angle=0, end angle={\alphaB-\alphaA}, radius=\rArc] node[pos=0.5, right]{$\theta$};
    \end{scope}
    \begin{scope}[rotate = \alphaB]
        \draw[gray,<->] (\rArc,0) arc[start angle=0, end angle={\alphaC-\alphaB}, radius=\rArc] node[pos=0.5, above, xshift=-0.2cm]{$\theta$};
    \end{scope}
    % may hardcoded!
    \begin{scope}[rotate = \alphaC]
        \draw[gray,<->] (\rArc,0) arc[start angle=0, end angle={120}, radius=\rArc] node[pos=0.6, below]{$\theta$};
    \end{scope}

    % draw the vectors
    % \begin{scope}[rotate = \alphaA]
        \draw [->,thick, blue] (0,0) -- (\uAx, \uAy) node [pos=1.0, right] {${u}_1$};
    % \end{scope}
    % \begin{scope}[rotate = \alphaB]
        \draw [->,thick, red] (0,0) -- (\uBx, \uBy) node [pos=1.0, above] {$u_2$};
    % \end{scope}
    % \begin{scope}[rotate = \alphaC]
        \draw [->,thick, orange] (0,0) -- (\uCx, \uCy) node [pos=1.0, below, anchor=north] {${u}_0$};
    % \end{scope}

    % draw the center
    \def\lc{0.2}
    \draw[thin] (0,-0.5*\lc) -- ++(0, \lc);
    \draw[thin] (-0.5*\lc, 0) -- ++(\lc, 0);

    \node at (0,1.6*\Rcut) {Proposition~\ref{prop:simplex}};

    % phantom to shift the picuture
    \phantom{
        \node at (0,-2.55) {};
    }
    
\end{tikzpicture}
    \end{minipage}
    \vspace{-1em}
    \caption{Left: Illustration of the choice of orthogonal input vectors proposed in Corollary~\ref{cor:scaling}, where $u^{(j)}$ denotes the respective direction in the input space. 
    Right: Illustration of the simplicial choice of input vectors considered in Proposition~\ref{prop:simplex}, where the angle $\theta$ between the vectors equals $120^\circ$.
    Both choices yield an optimal excitation for the depicted case of $m=2$.}
    \label{fig:scaling}
\end{figure*}
\begin{corollary}[Orthogonal inputs]\label{cor:scaling}
    Choose $U=\alpha\begin{bmatrix}
        u_0&\dots&u_\dd
    \end{bmatrix}$ in~\cref{eq:V} with a scaling factor $\alpha\geq\sqrt{\dd+1}$ such that \eqref{eq:NCO} is satisfied, %(ii) $u_0 = -\sum_{j=1}^m u_j$
    the input vectors $u_1,\dots,u_m$ form an orthonormal basis of $\R^m$, and $u_j=0$ holds for all $j \in \{m+1,\ldots,\dd\}$.
    Then, we have $\sigma_\mathrm{min}(V)=\sqrt{\dd+1}$.
\end{corollary}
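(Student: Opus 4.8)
The plan is to deduce the claim directly from Proposition~\ref{prop:SCO}. Writing $\bar U := \begin{bmatrix} u_0 & \cdots & u_d\end{bmatrix}$ for the matrix of unscaled inputs, the corollary's matrix is $U = \alpha\bar U$, so its $V$ coincides with the matrix $V_\alpha$ of Proposition~\ref{prop:SCO}, with $\bar U$ playing the role of $U$ there. Hence it suffices to verify the three hypotheses of that proposition for $\bar U$: (i) $\bar U$ has full row rank, (ii) $\bar U\one_{d+1} = 0$, and (iii) $\alpha \geq \sqrt{d+1}/\sigma_\mathrm{min}(\bar U)$.

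Conditions (i) and (ii) I would dispatch immediately. Since $u_1,\dots,u_m$ form an orthonormal basis of $\R^m$, the columns of $\bar U$ already span $\R^m$, so $\bar U$ has rank $m$, i.e., full row rank. Condition (ii) is precisely~\eqref{eq:NCO} divided by $\alpha > 0$, which holds by assumption.

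The one substantive point is (iii), which I would reduce to the estimate $\sigma_\mathrm{min}(\bar U) \geq 1$; granting this, $\sqrt{d+1}/\sigma_\mathrm{min}(\bar U) \leq \sqrt{d+1} \leq \alpha$, so (iii) follows from the standing hypothesis $\alpha \geq \sqrt{d+1}$. To obtain the estimate, I would compute the Gram matrix. Using $u_{m+1} = \dots = u_d = 0$,
\begin{equation*}
    \bar U\bar U^\top = \sum_{j=0}^d u_j u_j^\top = u_0 u_0^\top + \sum_{i=1}^m u_i u_i^\top = u_0 u_0^\top + I_m,
\end{equation*}
where the last equality is the resolution of the identity for the orthonormal basis $u_1,\dots,u_m$. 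This is a rank-one positive semidefinite perturbation of $I_m$, whose eigenvalues are $1 + \lVert u_0\rVert^2$ (with eigenvector $u_0$) and $1$ (on $u_0^\bot$). Consequently $\lambda_1(\bar U\bar U^\top) \geq 1$, i.e., $\sigma_\mathrm{min}(\bar U) \geq 1$, as needed.

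With (i)--(iii) verified, Proposition~\ref{prop:SCO} yields $\sigma_\mathrm{min}(V) = \sigma_\mathrm{min}(V_\alpha) = \sqrt{d+1}$. I do not expect any real obstacle here: the only computation requiring care is the eigenvalue analysis of the rank-one perturbation $I_m + u_0 u_0^\top$, and even that is standard; the remainder is bookkeeping together with the orthonormal-basis identity $\sum_{i=1}^m u_i u_i^\top = I_m$.
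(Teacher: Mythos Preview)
Your proof is correct. The route differs from the paper's: the paper computes $VV^\top$ from scratch, obtaining the block-diagonal form
\[
VV^\top = \begin{bmatrix} d+1 & 0 \\ 0 & \alpha^2(U_m\one_m\one_m^\top U_m^\top + I_m)\end{bmatrix},
\]
and then reads off the eigenvalues $d+1$, $\alpha^2$ (on $(U_m\one_m)^\perp$), and $\alpha^2(m+1)$ directly. You instead invoke Proposition~\ref{prop:SCO} and reduce the problem to showing $\sigma_{\min}(\bar U)\geq 1$ via $\bar U\bar U^\top = I_m + u_0u_0^\top$. The two computations are the same in spirit (note $u_0 = -U_m\one_m$, so the rank-one perturbation is identical), but your argument is more economical: it explains why this result is a \emph{corollary} of Proposition~\ref{prop:SCO} rather than redoing the block structure that Proposition~\ref{prop:SCO} already handled. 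The paper's version, on the other hand, yields the full eigenvalue list of $VV^\top$ as a by-product, which is slightly more informative but not needed for the stated conclusion.
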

\begin{proof}
    Per assumption, $U_m:=\begin{bmatrix}
        u_1 & \dots &u_{m}
    \end{bmatrix}$ is an orthogonal matrix and
    \begin{equation}\nonumber
     V=\begin{bmatrix}
         1 & \one_{m}^\top & \one_{\dd-m}^\top\\
         -\alpha U_m\one_m & \alpha U_m & 0
     \end{bmatrix}.
    \end{equation}
    A direct computation yields
    \begin{equation}\nonumber
         VV^\top = \begin{bmatrix}
            (\dd+1) & 0\\0 & \alpha^2 (U_m\one_m \one_m^\top U_m^\top + I_m)
        \end{bmatrix}.
    \end{equation}
    From the structure of $VV^\top\!$, one finds
    \begin{equation}\nonumber
        VV^\top e_1 = (\dd+1)e_1, \quad VV^\top \begin{bmatrix}
            0\\ U_m\one_m
        \end{bmatrix} = \alpha^2(m+1) \begin{bmatrix}
            0\\ U_m\one_m
        \end{bmatrix}
    \end{equation}
    and for $x\in \{U_m\one_m\}^\bot$ 
    \begin{equation}\nonumber
        VV^\top \begin{bmatrix}
            0\\ x
        \end{bmatrix} = \alpha^2 \begin{bmatrix}
            0\\ x
        \end{bmatrix}.
    \end{equation}
    This shows that the eigenvalues of $VV^\top$ are given by $\lambda_1=\dd+1$, $\lambda_i = \alpha^2$, $i \in \{2,\ldots,m\}$, and $\lambda_{m+1} = \alpha^2(m+1)$ showing the assertion.
\end{proof}
Indeed, increasing the scaling of the control inputs raises the smallest singular value, up to the point where it eventually saturates or the bound~$r_u$ pertaining to the control constraint set~$\mathbb{U}$ leads to a saturation.

\begin{remark}[Balanced normalized tight frames: BNTF]
    The problem of maximizing the smallest singular value of $V$ is closely related to the topic of frames, see~\cite{heineken2020balanced} for a discussion of existence and construction of BNTFs.
    A sequence $\{u_j\}_{j=0}^{\dd}\subset \mathbb R^m$ is a balanced normalized tight frame (BNTF) if it satisfies the balancing condition $\sum_{j=0}^{\dd} u_j = 0$, each element is normalized, $\lVert u_j\rVert=1$, $j\in\lbrace0,\dots,\dd\rbrace$, and the tightness property holds, i.e., $\sum_{j=0}^{\dd} \langle x,u_j\rangle^2 = A\lVert x\rVert^2$ for all $x\in\mathbb R^m$ with $A=\frac{\dd}{m}$. Hence the frame operator is $U_FU_F^\top = \frac{\dd+1}{m} I_{m}$, $U_F:=\begin{bmatrix}u_0 & \dots & u_{\dd}\end{bmatrix}$, see \cite[Theorems~2.1,~3.1]{benedetto2003finite}. If $U=\alpha U_F$ rescales the frame by a scaling factor $\alpha>0$, the matrix $V$ in \eqref{eq:V} satisfies
    \begin{equation*}
        \sigma_\mathrm{min}(V) = \min\{\sqrt{\dd+1}, \alpha\sigma_\mathrm{min}(U_F)\} = \min\!\left\lbrace\!\sqrt{\dd+1},\ \alpha\sqrt{\tfrac{\dd+1}{m}}\right\rbrace.
    \end{equation*}
    Choosing $\alpha=r_u$, ensures that input constraint is respected and in this case $\sigma_\mathrm{min}(V)$ exactly attains the bound in~\eqref{eq:bound:control-constraints} of Proposition~\ref{prop:approx:error}. 
    Moreover, for $\alpha \geq \sqrt{\tfrac{m}{\dd+1}}$, one reaches the sharp bound $\sigma_\mathrm{min}(V)=\sqrt{\dd+1}$.
\end{remark}

\subsection{Sequential data collection using subspace angles} 
\label{sec:excitation:efficient}

Before we proceed, we show that exciting the system using simplex vertices yields an optimal solution of the input-constrained regression problem~\eqref{eq:lsq} in order to motivate the following results, see Figure~\ref{fig:scaling} (right) for an illustration. Furthermore, the results show, in the case $m=\dd$, that the bound in~\eqref{eq:bound:control-constraints} of Proposition~\ref{prop:approx:error} is sharp.
\begin{proposition}[Simplex vertices as inputs]\label{prop:simplex}
    Let the first $m+1$ control inputs $u_0,u_1,\ldots,u_m$ be the vertices of a regular $m$-simplex, i.e., 
    \[
        u_0  = -\frac{1}{\sqrt{m}} \one_{m}, \quad u_j = \sqrt{\frac{m+1}{m}} e_j + \frac{1-\sqrt{m+1}}{m\sqrt{m}} \one_{m}, \ j \in \{1,\dots,m\},
    \]
    and set $u_j = 0$ for all $j \in \{ m+1,\ldots,\dd\}$. Then, using $U = \alpha\begin{bmatrix}
        u_0&\dots & u_{\dd}
    \end{bmatrix}$ with scaling factor $\alpha>0$ to construct the matrix~$V$ in~\eqref{eq:V}, we have $\sigma_\mathrm{min}(V) = \min\left\lbrace\sqrt{\dd+1}, \alpha \sqrt\frac{m+1}{m}\right\rbrace$, while $\lVert u_j \rVert \leq \alpha$ holds for all $j \in \{0,1,\ldots,\dd\}$.
    In particular, $\sigma_\mathrm{min}(V)$ attains its upper bound~\eqref{eq:bound:control-constraints} for $r_u = \alpha$ and $m = \dd$ for constrained inputs, i.e., $u_j \in \mathbb{U}$, $j \in \{0,1,\ldots,\dd\}$.
\end{proposition}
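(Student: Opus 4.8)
The plan is to show that the prescribed vertices $u_0,\dots,u_m$ form a balanced normalized tight frame of $\R^m$, so that the necessary condition~\eqref{eq:NCO} holds and the scaling mechanism of Proposition~\ref{prop:SCO} applies. The whole argument reduces to three elementary properties of the simplex vertices, after which the singular-value formula drops out of the block structure of $VV^\top$.

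First I would record the normalization and the pairwise products. For $u_0=-m^{-1/2}\one_m$ one has $\lVert u_0\rVert^2=1$ since $\lVert\one_m\rVert^2=m$; for $j\geq 1$, writing $u_j=a e_j+b\one_m$ with $a=\sqrt{(m+1)/m}$ and $b=(1-\sqrt{m+1})/(m\sqrt m)$, expanding $\lVert u_j\rVert^2=a^2+2ab+b^2m$ and collecting terms gives $1$, and the same expansion yields $\langle u_i,u_j\rangle=-1/m$ for every $i\neq j$. Second, summing the formulas gives $\sum_{j=1}^m u_j=(a+mb)\one_m=m^{-1/2}\one_m=-u_0$, so $\sum_{j=0}^m u_j=0$; since $u_j=0$ for $j>m$, the unscaled matrix $U_0:=[\,u_0\ \cdots\ u_d\,]$ satisfies $U_0\one_{d+1}=0$, and hence $U=\alpha U_0$ fulfills~\eqref{eq:NCO}.

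The computational heart is the frame operator $U_0U_0^\top=\sum_{j=0}^m u_ju_j^\top$. Rather than summing rank-one terms directly, I would route the computation through the $(m+1)\times(m+1)$ Gram matrix of $u_0,\dots,u_m$, which by the first step equals $G=\tfrac{m+1}{m}I_{m+1}-\tfrac1m\one_{m+1}\one_{m+1}^\top$. Its spectrum is immediate: $\one_{m+1}$ is an eigenvector with eigenvalue $0$, while its orthogonal complement carries eigenvalue $(m+1)/m$ with multiplicity $m$. Since the nonzero eigenvalues of $U_0^\top U_0$ and of $U_0U_0^\top$ coincide, the $m\times m$ matrix $U_0U_0^\top$ has the single eigenvalue $(m+1)/m$, i.e.\ $U_0U_0^\top=\tfrac{m+1}{m}I_m$ and $\sigma_\mathrm{min}(U_0)=\sqrt{(m+1)/m}$.

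With these pieces, the conclusion follows exactly as in the proof of Proposition~\ref{prop:SCO}: because~\eqref{eq:NCO} holds, $VV^\top=\operatorname{diag}\!\bigl(d+1,\ \alpha^2U_0U_0^\top\bigr)=\operatorname{diag}\!\bigl(d+1,\ \alpha^2\tfrac{m+1}{m}I_m\bigr)$, whence $\sigma_\mathrm{min}(V)=\min\{\sqrt{d+1},\,\alpha\sqrt{(m+1)/m}\}$. The norm bound is immediate, as the columns of $U$ are $\alpha u_j$ with $\lVert\alpha u_j\rVert=\alpha$ for $j\leq m$ and $0$ otherwise, so all inputs lie in $\mathbb U$ once $r_u=\alpha$. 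Finally, for $m=d$ there is no zero padding, and the attained value $\min\{\sqrt{m+1},\,\alpha\sqrt{(m+1)/m}\}$ coincides with the right-hand side of~\eqref{eq:bound:control-constraints} evaluated at $r_u=\alpha$ and $d=m$, proving sharpness. I expect the frame-operator identity $U_0U_0^\top=\tfrac{m+1}{m}I_m$ to be the only nontrivial step, which is precisely why I would obtain it from the explicitly diagonalizable Gram matrix rather than by direct summation.
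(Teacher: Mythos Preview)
Your proof is correct and essentially parallel to the paper's, differing mainly in which Gram matrix you diagonalize. The paper works with $V^\top V\in\R^{(d+1)\times(d+1)}$ and exhibits explicit eigenvectors ($\one_{d+1}$ and vectors supported on $\{\one_{m+1}\}^\perp$), reading off the eigenvalues $d+1$ and $\alpha^2(m+1)/m$ directly. You instead verify~\eqref{eq:NCO} first, use it to block-diagonalize $VV^\top\in\R^{(m+1)\times(m+1)}$ as in Proposition~\ref{prop:SCO}, and then obtain $U_0U_0^\top=\tfrac{m+1}{m}I_m$ by transferring the eigenvalues of the explicitly diagonalizable $(m+1)\times(m+1)$ Gram matrix of the vertices. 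Your route ties the result more tightly to the paper's own NCO/SCO framework and the BNTF remark, and avoids guessing eigenvectors; the paper's route is a touch more self-contained, needing neither Proposition~\ref{prop:SCO} nor the $W^\top W\leftrightarrow WW^\top$ spectral correspondence. Both are short and the underlying computation (the simplex Gram matrix $I_{m+1}-\tfrac1m(\one_{m+1}\one_{m+1}^\top-I_{m+1})$) is the same.
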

\begin{proof}
    Let $a = -\frac{1}{\sqrt{m}}$, $b = \sqrt{\frac{m+1}{m}}$, $c = \frac{1-\sqrt{m+1}}{m\sqrt{m}}$, and $t = \sqrt{m+1}$. Then 
    \begin{eqnarray*}
        \langle u_i, u_j \rangle & = & 2bc + mc^2 = 2\frac{t(1-t)}{m^2} + \frac{(1-t)^2}{m^2} = \frac{1-t^2}{m^2} = -\frac{1}{m}, \\
        \langle u_i, u_i \rangle & = & b^2 + 2bc + mc^2 = \frac{m+1}{m} - \frac{1}{m} = 1, \\
        \langle u_0, u_j \rangle & = & ab + mac = - \frac{t}{m} + \frac{t-1}{m} = -\frac{1}{m}
    \end{eqnarray*}
    for $i,j\in\lbrace 1,\dots, m\rbrace$, $j\neq i$.
    In particular, $\lVert u_i \rVert = 1$ and $\lVert U e_{i+1} \rVert = \alpha$ hold for all $i\in\lbrace 0,\dots,m \rbrace$. Therefore,
    \begin{equation}\nonumber
        V^\top V = \one_{\dd+1} \one_{\dd+1}^\top + U^\top U = \one_{\dd+1} \one_{\dd+1}^\top 
        + \alpha^2\begin{bmatrix}
            -\frac{1}{m} \one_{m+1}\one_{m+1}^\top + (1+\frac{1}{m}) I_{m+1} & 0\\ 0 & 0
        \end{bmatrix}
    \end{equation}
    and, consequently, 
    \begin{equation}\nonumber
        V^\top V \one_{\dd+1} = (\dd+1)\one_{\dd+1},\quad V^\top V \one_{\dd+1}\begin{bmatrix}
            x\\ 0
        \end{bmatrix} = \alpha^2\frac{m+1}{m}\begin{bmatrix}
            x\\ 0
        \end{bmatrix}
    \end{equation}
    holds for all $x\in \{\one_{m+1}\}^\bot$, which implies the assertion.
\end{proof}
Whereas Corollary~\ref{cor:scaling} and Proposition~\ref{prop:simplex} provide optimal solutions in the unconstrained ($r_u = \infty$) and input-constrained case ($r_u \in (0,\infty)$), their applicability assumes full flexibility w.r.t.\ the choice of~$U$ --~including scaling. 
Next, we provide a framework providing guidance on constructing a set of \textit{sufficiently exciting} inputs. 
To this end, the concept of subspace angles is leveraged in order to derive a lower bound, which is robust to small deviations. 
The subspace angle is the angle $\theta(y,X)$ between a vector $y\in\R^n$ and a linear subspace $X\subset \R^n$, which is defined by
\begin{equation}
    \cos \theta(y,X) := \min_{x\in X\setminus\{0\}} \frac{\lvert\langle y, x\rangle\rvert}{\lVert y\rVert \lVert x\rVert} = \frac{\lVert P_X y\rVert}{\lVert y\rVert},
\end{equation}
where $P_X$ denotes the orthogonal projection onto the space $X$.
This intuitive concept allows practitioners to directly infer a required accuracy for ensuring a sufficiently large lower bound on~$\sigma_\mathrm{min}(V)$ and thus a small upper bound~\eqref{eq:bound:approx_Prop1}. Exemplary applications follow in the next two sections. %
In conclusion, the previous results are of particular interest for a preparatory offline phase, while the following main result is applicable at runtime.

We focus on the case $\dd=m$, in which $V \in \R^{(m+1)\times (m+1)}$ defined by~\eqref{eq:V} is a quadratic matrix. 
Moreover, we define the matrix $U_m = \begin{bmatrix} u_1 & \cdots & u_m \end{bmatrix}$, i.e., $U$ without the first column.  
A key step in the upcoming analysis is the treatment of the $\one_{m+1}$-vector in the first row of~$V$ and the impact of~$u_0$ on our lower bound on~$\sigma_\mathrm{min}(V)$. To this end, we consider the function
\begin{equation}
\label{eq:theta}
    \Theta : \R^m\to [0,1],\quad x \mapsto 1 - \sqrt{\frac{m+1 - \frac{(1-\one_m^\top x)^2}{1+\|x\|^2}}{m+1}}.
\end{equation}
It will turn out that $1-\Theta(U_m^{-1}u_0)$ coincides with the cosine of the angle. 
Note that $0\le\Theta(x)\le 1$ for all $x\in\R^m$. In particular, the function~$\Theta$ vanishes on the affine subspace $\tfrac 1m\one_m + (\operatorname{span}\{\one_m\})^\perp = \{x\in\R^m : \one_m^\top x = 1\}$, attains its maximum of $1$ at $x=-\one_m$, and satisfies $\Theta(0) = 1 - \sqrt{m/(m+1)}$, see also Figure~\ref{fig:Theta}.
\begin{figure}
    \centering
    %\includegraphics[width=0.45\linewidth]{images/Theta3D.png}
    %\qquad\qquad
    %\includegraphics[width=0.35\linewidth]{images/Theta_contour.png}
    \begin{minipage}[t]{0.5\textwidth}
        \centering
        \vspace{0pt}
        \begin{tikzpicture}
\begin{axis}[
    grid=major,
    view={75}{50},
    scale=0.7,
    zmin = 0,
    zmax = 1,
]
\addplot3[
    surf,
    %shader=faceted,
    samples=66, % =1+5*n to hit the point (-1,-1)
    domain=-5:5,
    miter limit=1,
    colormap/hot
]
{1-sqrt((3-(1-x-y)^2/(1+x^2+y^2))/3)};

% Line x + y = 1
\addplot3[
    domain=-4:5,
    samples=2,
    thick,
    color=red
]
({x}, {1-x}, {0});
\end{axis}
\end{tikzpicture}
    \end{minipage}%
    \begin{minipage}[t]{0.5\textwidth}
        \centering
        \vspace{0pt}
        \input{figTikz/sphere_header_bisector.tex}
        %% 2d picture
\begin{tikzpicture}[, >=triangle 45, scale=1.2]
    % draw circle
    \draw[black, fill=none, very thick] (0,0) circle (\Rcut) ;
    \draw[dashed, very thin] (0,-\Rcut) -- (0,0);
    \draw[very thin, ->] (0,0) -- (0,1.3*\Rcut) node[pos=1.0, left, anchor=east] {$u^{(2)}$};
    \draw[dashed, very thin] (-\Rcut,0) -- (0,0);
    \draw[very thin, ->] (0,0) -- (1.3*\Rcut,0) node[pos=1.0, below, anchor=north] {$u^{(1)}$};

    % draw the angles
    % \def\rArc{0.75}
    % \begin{scope}[rotate = \alphaA]
    %     \draw[gray,<->] (\rArc,0) arc[start angle=0, end angle={\alphaB-\alphaA}, radius=\rArc] node[pos=0.5, right]{$\theta_1$};
    % \end{scope}
    % \begin{scope}[rotate = \alphaB]
    %     \draw[gray,<->] (\rArc,0) arc[start angle=0, end angle={\alphaC-\alphaB}, radius=\rArc] node[pos=0.5, above, xshift=-0.2cm]{$\theta_2$};
    % \end{scope}
    % % may hardcoded!
    % \begin{scope}[rotate = \alphaC]
    %     \draw[gray,<->] (\rArc,0) arc[start angle=0, end angle={135}, radius=\rArc] node[pos=0.6, below]{$\theta_0$};
    % \end{scope}

    % draw the vectors
    \begin{scope}[rotate = \alphaA]
        \draw [->,thick, blue] (0,0) -- (\Rcut,0) node [pos=1.0, above, anchor=south west] {${u}_1$};
    \end{scope}
    \begin{scope}[rotate = \alphaB]
        \draw [->,thick, red] (0,0) -- (\Rcut,0) node [pos=1.0, left] {${u}_2$};
    \end{scope}
    \begin{scope}[rotate = \alphaC]
        \draw [->,thick, orange] (0,0) -- (2*\Rcut,0) node [pos=1.0, below, anchor = north west] {${u}_0$};
    \end{scope}

    % draw the bisector
    \pgfmathsetmacro{\alphaBisector}{0.5*(\alphaA+\alphaB)}
    \begin{scope}[rotate = \alphaBisector]
        \draw [dashdotted, orange] (0,0) -- (\Rcut,0);
    \end{scope}

    % draw the center
    \def\lc{0.2}
    \draw[thin] (0,-0.5*\lc) -- ++(0, \lc);
    \draw[thin] (-0.5*\lc, 0) -- ++(\lc, 0);

    % \node at (0,1.8*\Rcut) {projection at 1-plane};

    % phantom to shift the picuture
    % \phantom{
    %     \node at (0,-2) {};
    % }
    
\end{tikzpicture}
    \end{minipage}
    \vspace{-3em}
    \caption{Left: Surface plot of the function $\Theta$ in \eqref{eq:theta} for the case $m=2$ on the box $[-5,5]^2\subset\R^2$ with peak at $-\one_2 = [-1,-1]^\top$. The affine subspace on which $\Theta$ vanishes is indicated as line red. Right: Choosing $u_0 = -(u_1+u_2)$ for two randomly given $u_1$ and $u_2$ to maximize the subspace angles for $m=2$, see Theorem~\ref{t:main}.}
\label{fig:Theta}
\end{figure}

The following theorem is the main result of this section and yields a geometrically interpretable lower bound on~$\sigma_\mathrm{min}(V)$ in terms of subspace angles, see Figure~\ref{fig:Theta}.
\begin{theorem}\label{t:main}
    Let $U_m$ be invertible and $\{i_1,\ldots,i_m\} = \{1,\dots,m\}$ be such that $\|u_{i_1}\|\ge\|u_{i_2}\|\ge\cdots\ge\|u_{i_m}\|$. Further, set $I_j = \{1,\dots,m\}\backslash\{i_1,\ldots,i_j\} = \{i_s : s>j\}$. Then, if $\theta(u_{i_s},S_{I_s})$ denotes the angle between the vector $u_{i_s}$ and the subspace $S_{I_s}$ defined by $\operatorname{span}\{u_i : i\in I_s\}$, we have
    \begin{align}\label{e:main_estimate}
        \sigma_\mathrm{min}^2(V) \ge \Theta(U_m^{-1}u_0)\,\cdot\,\min\left\{m+1,\,\|u_{i_m}\|^{2}\,\cdot\,\prod_{s=1}^{m-1}\big(1 - \cos\theta(u_{i_s},S_{I_s})\big)\right\}.
    \end{align}
    In particular, if $u_0 = -U_m\one_m$ and $\|u_{i_m}\|\le m+1$ hold, the Inequality~\eqref{e:main_estimate} simplifies to $\sigma_\mathrm{min}^2(V)\ge\|u_{i_m}\|^{2}\,\cdot\,\prod_{s=1}^{m-1}\big(1 - \cos\theta(u_{i_s},S_{I_s})\big)$.
\end{theorem}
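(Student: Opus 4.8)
The plan is to work directly with $\sigma_\mathrm{min}^2(V)=\min_{\|z\|=1}\|Vz\|^2$ and to exploit that, since $d=m$, we have $\|Vz\|^2=(\one_{m+1}^\top z)^2+\|Uz\|^2$ with $U=[\,u_0\ U_m\,]$. As $U_m$ is invertible, $U$ has rank $m$ and $\ker U=\linspan\{n_0\}$ with $n_0=(1+\|x\|^2)^{-1/2}[1;\,-x]$, where $x=U_m^{-1}u_0$. First I would record the two scalars governing the geometry: the smallest (nonzero) singular value $\sigma:=\sigma_\mathrm{min}(U)$ and $c_0:=\one_{m+1}^\top n_0=(1-\one_m^\top x)/\sqrt{1+\|x\|^2}$, so that $c_0^2=(1-\one_m^\top x)^2/(1+\|x\|^2)$ is exactly the quantity subtracted inside the square root defining $\Theta$; that is, $\Theta(x)=1-\sqrt{1-c_0^2/(m+1)}$, and with $b^2:=m+1-c_0^2$ one has $1-\Theta(x)=b/\sqrt{m+1}$.

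The first main step is a quadratic-form lower bound. Since $Uz=U\,P_{(\ker U)^\bot}z$, we get $\|Uz\|^2\ge\sigma^2\|P_{(\ker U)^\bot}z\|^2=\sigma^2\,z^\top(I-n_0n_0^\top)z$, whence $\|Vz\|^2\ge z^\top M z$ with $M:=\one_{m+1}\one_{m+1}^\top+\sigma^2(I-n_0n_0^\top)$ and therefore $\sigma_\mathrm{min}^2(V)\ge\lambda_1(M)$. The matrix $M$ acts as $\sigma^2 I$ on $\linspan\{\one_{m+1},n_0\}^\bot$ and reduces, in an orthonormal basis of $\linspan\{\one_{m+1},n_0\}$, to a $2\times2$ block of trace $m+1+\sigma^2$ and determinant $c_0^2\sigma^2$. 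Its smaller eigenvalue is
\[
    \lambda_-=\tfrac12\Big((m+1+\sigma^2)-\sqrt{(m+1+\sigma^2)^2-4c_0^2\sigma^2}\Big),
\]
and evaluating the characteristic quadratic at $\sigma^2$ gives the value $-\sigma^2 b^2\le0$, so $\lambda_-\le\sigma^2$ and hence $\lambda_1(M)=\lambda_-$.

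It then remains to show $\lambda_-\ge\Theta(x)\,\min\{m+1,\sigma^2\}$, which I would split into two cases. If $\sigma^2\ge m+1$, then $M\succeq M_0:=\one_{m+1}\one_{m+1}^\top+(m+1)(I-n_0n_0^\top)$ since $I-n_0n_0^\top\succeq0$; reusing the eigenvalue formula with $\sigma^2$ replaced by $m+1$ yields $\lambda_1(M)\ge\lambda_1(M_0)=(m+1)-b\sqrt{m+1}=(m+1)\,\Theta(x)=\Theta(x)\min\{m+1,\sigma^2\}$. If $\sigma^2\le m+1$, the target $\lambda_-\ge\Theta(x)\sigma^2$ is equivalent, after isolating the square root and squaring (both sides being nonnegative because $\sigma^2\le m+1$), to the identity-verified inequality $\big((m+1)-\sigma^2\big)\big(\sqrt{m+1}-b\big)\ge0$, which holds since $b=\sqrt{m+1-c_0^2}\le\sqrt{m+1}$. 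I expect this to be the most delicate part, as it is where the opaque definition of $\Theta$ is matched exactly to the spectral computation; the reward is that it collapses to a manifestly true elementary inequality.

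Finally I would replace $\sigma^2$ by the stated product. From $UU^\top=U_mU_m^\top+u_0u_0^\top\succeq U_mU_m^\top$ we obtain $\sigma^2=\sigma_\mathrm{min}^2(U)\ge\sigma_\mathrm{min}^2(U_m)$. For the latter I would order the columns by decreasing norm and peel them off one at a time: writing $U_m c=c_{i_1}u_{i_1}+w$ with $w\in S_{I_1}$ and combining $|\langle u_{i_1},w\rangle|\le\cos\theta(u_{i_1},S_{I_1})\,\|u_{i_1}\|\,\|w\|$ with $2\|u_{i_1}\|\,\|w\|\le\|u_{i_1}\|^2+\|w\|^2$ gives $\|U_mc\|^2\ge(1-\cos\theta(u_{i_1},S_{I_1}))(c_{i_1}^2\|u_{i_1}\|^2+\|w\|^2)$. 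A short downward induction, using $\|u_{i_1}\|\ge\cdots\ge\|u_{i_m}\|$, then yields $\|U_mc\|^2\ge\|u_{i_m}\|^2\prod_{s=1}^{m-1}(1-\cos\theta(u_{i_s},S_{I_s}))\,\|c\|^2$, hence the same lower bound for $\sigma_\mathrm{min}^2(U_m)$. Since $t\mapsto\min\{m+1,t\}$ is nondecreasing, chaining these bounds gives \eqref{e:main_estimate}. For the closing statement, $u_0=-U_m\one_m$ forces $x=-\one_m$, so $c_0^2=m+1$ and $\Theta(x)=1$, while the norm bound on $\|u_{i_m}\|$ makes the product term the active argument of the minimum, leaving $\sigma_\mathrm{min}^2(V)\ge\|u_{i_m}\|^2\prod_{s=1}^{m-1}(1-\cos\theta(u_{i_s},S_{I_s}))$.
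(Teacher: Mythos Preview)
Your proof is correct and follows the same two-step architecture as the paper (relate $\sigma_\mathrm{min}^2(V)$ to $\sigma_\mathrm{min}^2(U_m)$ via the factor $\Theta$, then bound $\sigma_\mathrm{min}^2(U_m)$ by peeling off columns according to the angle ordering), but your implementation of each step is genuinely different. The paper factors both steps through a single auxiliary result (its Lemma~\ref{t:kaur_ext}, an extension of a result of Kaur et~al.\ on $\lambda_\mathrm{min}(uu^\top+Q)$ proved via a Schur-complement argument), applying it first with $P=\one_{m+1}\one_{m+1}^\top$ and $Q=U^\top U$ and then recursively with $P=u_{i_s}u_{i_s}^\top$. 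You bypass that lemma entirely: for the first step you replace $U^\top U$ by the explicit lower bound $\sigma^2(I-n_0 n_0^\top)$ and compute the eigenvalues of the resulting matrix $M$ on the two-dimensional subspace $\linspan\{\one_{m+1},n_0\}$ directly, reducing the desired inequality to $((m{+}1)-\sigma^2)(\sqrt{m{+}1}-b)\ge 0$; for the second step you use an elementary variational argument combining $|\langle u_{i_1},w\rangle|\le\cos\theta\,\|u_{i_1}\|\|w\|$ with AM--GM. The payoff of your approach is that it is self-contained and avoids the Schur-complement machinery; the paper's approach is more modular, since a single lemma covers both reductions and also handles the degenerate case where $V$ is singular uniformly. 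One small remark: in the closing sentence you appeal to the hypothesis $\|u_{i_m}\|\le m+1$ to make the product term the active argument of the minimum; strictly this requires $\|u_{i_m}\|^2\le m+1$ (together with $\prod_s(1-\cos\theta_s)\le 1$), so the theorem statement appears to contain a typo that you have simply inherited.
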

\begin{remark}\label{rem:extending_V}
    The assumption $\dd=m$ can be lifted, as adding more data columns of the form $v = \left[\begin{smallmatrix}1\\u\end{smallmatrix}\right]$ to the matrix $V$ can only increase the smallest singular value. Indeed, if $v\in\R^{m+1}$, then
    \begin{align}
    \begin{split}\label{eq:increse:singular:value}
        \sigma_\mathrm{min}^2(\begin{bmatrix}V & v\end{bmatrix})
    &= \lambda_{1}\left(\begin{bmatrix}V & v\end{bmatrix}\begin{bmatrix}
        V\\v
    \end{bmatrix}\right) = \lambda_{1}(V\,V^\top + vv^\top) \\
    &= \inf_{\|x\|=1}x^\top(V\,V^\top + vv^\top)x
    = \inf_{\|x\|=1}x^\top V\,V^\top x + (v^\top x)^2\,\\
    &\ge\,\inf_{\|x\|=1}x^\top V\,V^\top x = \lambda_{1}(V\,V^\top) = \sigma_\mathrm{min}^2(V).
    \end{split}
    \end{align}    
\end{remark}
The proof of Theorem~\ref{t:main} builds in its first step, i.e., the treatment of the first row and the $(m+1)\text{th}$ input~$u_0$, upon the following preparatory lemma, which provides an estimate on the smallest positive eigenvalue for sums of symmetric positive semi-definite (SPSD) matrices. 
To this end, we require the following notation: Given an SPSD matrix $P\in\R^{n\times n} \setminus\{0\}$, 
let $\lambda_\mathrm{min}(P)$ be the smallest positive eigenvalue of $P$, i.e., $\lambda_\mathrm{min}(P) = \lambda_{n-r+1}(P)$ with $r:=\operatorname{rank}(P)>0$.
\begin{lemma}\label{t:kaur_ext}
    Let $u\in\R^n$ and set $P = uu^\top$. Moreover, let $Q\in\R^{n\times n}\setminus\{0\}$ be an SPSD matrix. Then,
    \begin{equation}\label{e:kaur_ext}
    \lambda_\mathrm{min}(P+Q) \geq (1 - \cos\theta(u,\operatorname{ran}Q))\cdot\min\!\left\lbrace \|u\|^2,\,\lambda_{\min}(Q)\right\rbrace.
    \end{equation}
\end{lemma}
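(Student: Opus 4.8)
The plan is to reduce the assertion to a variational problem and then to an explicit $2\times 2$ eigenvalue estimate. First I would record that $P+Q$ is SPSD with $\ker(P+Q) = \ker P \cap \ker Q$, so that $\operatorname{ran}(P+Q) = \operatorname{span}\{u\} + \operatorname{ran}Q$ and $\lambda_{\min}(P+Q) = \min\{z^\top(P+Q)z : z\in\operatorname{ran}(P+Q),\ \|z\|=1\}$. The cases $u=0$ and $u\in\operatorname{ran}Q$ are trivial, since then the right-hand side of \eqref{e:kaur_ext} vanishes; hence I assume $u\neq 0$ and $u\notin X:=\operatorname{ran}Q$. Writing $u = u_\parallel + u_\perp$ with $u_\parallel = P_X u$ and $u_\perp = (I-P_X)u \neq 0$, the angle satisfies $\cos\theta(u,X) = \|u_\parallel\|/\|u\|$, and $\operatorname{ran}(P+Q) = X\oplus\operatorname{span}\{u_\perp\}$ is an \emph{orthogonal} decomposition.

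Next I would parametrize a unit vector in the range as $z = x + t\,u_\perp/\|u_\perp\|$ with $x\in X$ and $t\in\R$, $\|x\|^2+t^2 = 1$. Using $Qu_\perp = 0$ and $u_\perp\perp X$, a direct computation gives
\[
z^\top(P+Q)z = \big(u_\parallel^\top x + t\|u_\perp\|\big)^2 + x^\top Qx \ \ge\ \big(u_\parallel^\top x + t\|u_\perp\|\big)^2 + \lambda_{\min}(Q)\|x\|^2,
\]
the last step using $x\in\operatorname{ran}Q$. The right-hand side is the quadratic form $\tilde z^\top B\tilde z$ of an explicit SPSD matrix $B$ on $X\oplus\R$ in the variable $\tilde z=(x,t)$, and since $z\mapsto\tilde z$ is an isometry of $\operatorname{ran}(P+Q)$ onto $X\oplus\R$, I obtain $\lambda_{\min}(P+Q)\ge\lambda_{\min}(B)$. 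The matrix $B$ acts as the scalar $\lambda_{\min}(Q)$ on the directions of $X$ orthogonal to $u_\parallel$, while on the two-dimensional space spanned by $u_\parallel/\|u_\parallel\|$ and the $t$-axis it reduces to
\[
\hat B = \begin{bmatrix} \|u_\parallel\|^2 + \lambda_{\min}(Q) & \|u_\perp\|\,\|u_\parallel\| \\ \|u_\perp\|\,\|u_\parallel\| & \|u_\perp\|^2 \end{bmatrix},
\]
so that $\lambda_{\min}(B) = \min\{\lambda_{\min}(Q),\,\lambda_{\min}(\hat B)\}$ (the sub-case $u_\parallel=0$ is immediate, as $B$ is then block-diagonal).

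The decisive and most technical step is the lower bound on $\lambda_{\min}(\hat B)$. Abbreviating $\rho=\|u\|^2$ and $\mu=\lambda_{\min}(Q)$ and using $\|u_\parallel\|^2+\|u_\perp\|^2=\rho$, one finds $\operatorname{tr}\hat B = \rho+\mu$ and $\det\hat B = \mu\|u_\perp\|^2 = \mu\rho(1-\cos^2\theta)$; both are symmetric in $\rho$ and $\mu$, which (together with the symmetry of the target $(1-\cos\theta)\min\{\rho,\mu\}$) lets me assume $\mu\le\rho$ without loss of generality. Then $\lambda_{\min}(\hat B)$ is the smaller root of $q(\lambda)=\lambda^2-(\rho+\mu)\lambda+\mu\rho(1-\cos^2\theta)$, and I would show $\lambda_{\min}(\hat B)\ge(1-\cos\theta)\mu$ by verifying that the candidate value $\lambda_\ast=(1-\cos\theta)\mu$ lies to the left of the vertex $(\rho+\mu)/2$ (clear, since $\lambda_\ast\le\mu\le(\rho+\mu)/2$) and that $q(\lambda_\ast)\ge0$. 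The latter is the one genuine computation: it factors as $q(\lambda_\ast)=\cos\theta\cdot\mu(1-\cos\theta)(\rho-\mu)\ge0$, where nonnegativity uses $\cos\theta\in[0,1]$ and $\mu\le\rho$. Combining with $\lambda_{\min}(B)=\min\{\mu,\lambda_{\min}(\hat B)\}$ and $\mu\ge(1-\cos\theta)\mu$ yields \eqref{e:kaur_ext}.

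I expect the main obstacle to be precisely the sign of $q(\lambda_\ast)$: the cross term $u_\parallel^\top x$ in the Rayleigh quotient can be adversarially negative and partially cancel the useful contribution $t\|u_\perp\|$, so the estimate is genuinely tight (equality holds when $\rho=\mu$, reflecting the balanced two-dimensional interaction). Exploiting the symmetry of $\operatorname{tr}\hat B$ and $\det\hat B$ in $(\rho,\mu)$ is what makes the final scalar inequality transparent; without it one is pushed into a less clean case distinction.
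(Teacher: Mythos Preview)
Your argument is correct and takes a genuinely different route from the paper's. The paper factors $Q=UU^\top$, sets $A=[u\ U]$, and works with $(A^\top A)^{-1}$: it expresses the inverse through the Schur complement $s=\|u\|^2-\|P_{\operatorname{ran}U}u\|^2$, writes $(A^\top A)^{-1}=Z_wLZ_w^\top$ with unit-triangular $Z_w$ and block-diagonal $L=\operatorname{diag}(s^{-1},(U^\top U)^{-1})$, and then bounds $\|(A^\top A)^{-1}\|$ by a chain of scalar inequalities involving $\|w\|$, eventually reaching the factor $1/(1-\cos\theta)$ after a telescoping $(\|u\|+\|P_Uu\|)(\|u\|-\|P_Uu\|)$ trick. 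Your proof instead splits $\operatorname{ran}(P+Q)=X\oplus\operatorname{span}\{u_\perp\}$ orthogonally, lower-bounds the Rayleigh quotient by the quadratic form of a block-diagonal operator $B$, and reduces to an explicit $2\times2$ matrix $\hat B$ whose characteristic polynomial you evaluate at the target $\lambda_\ast=(1-\cos\theta)\min\{\rho,\mu\}$. Your approach is more geometric and makes the tightness at $\rho=\mu$ visible; the paper's Schur-complement route stays closer to the cited rank-general result of Kaur et al.\ and is perhaps more natural if one hopes to extend beyond rank-one $P$ (which the paper flags as open). One minor point: your WLOG $\mu\le\rho$ is justified for $\lambda_{\min}(\hat B)$ by the symmetry of trace and determinant, but the intermediate quantity $\min\{\mu,\lambda_{\min}(\hat B)\}$ is not symmetric in $(\rho,\mu)$; the case $\rho<\mu$ still goes through since then $\mu>\rho\ge(1-\cos\theta)\rho$, so nothing is lost, but it is worth saying in a final write-up.
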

\begin{remark}\label{r:kaur}
If $u\in\operatorname{ran}Q$, the statement of the theorem is trivial. If $u\notin\operatorname{ran}Q$, it follows that $Q$ has a non-trivial kernel. In the case where $\ker P\cap\ker Q = \{0\}$ (i.e., $P+Q$ is positive definite), Lemma~\ref{t:kaur_ext} is a special instance of~\cite[Theorem 3.1]{kaur23}. The main contribution of Lemma~\ref{t:kaur_ext} is that it also holds for singular sums $P+Q$ with rank-one matrix $P$. We leave it as an open problem to extend Lemma~\ref{t:kaur_ext} to SPSD matrices $P$ with higher rank.
\end{remark}
The proofs of Theorem~\ref{t:main} and Lemma~\ref{t:kaur_ext} are given in the Appendix, see Section~\ref{sec:appendix}.

    \section{Flexible sampling and uniform error bounds for bilinear EDMD}
    \label{sec:Koopman}

This section presents an application of the results derived in Sections~\ref{sec:problem-formulation} and~\ref{sec:excitation} based on the Koopman theory introduced in Subsection~\ref{subsec:Koopman}, 
which originally motivated their development. 
To this end, 
we show how the affine-linear data fitting from Subsection~\ref{sec:problem-formulation} can be applied such that flexible data sampling for bilinear EDMDc is possible.  
In Subsection~\ref{subsec:kernel:EDMD}, we discuss kernel EDMD, a variant of EDMD using data informed observables to model the underlying dynamics. The contribution in this section is an update of the error bounds for the control extension that was first derived by~\cite{BoldPhil24}.

\subsection{Generator extended dynamic mode decomposition}
\label{subsec:generator:EDMD}

Instead of using linear EDMDc \cite{proctor2016dynamic, korda2018linear} as a learning algorithm to obtain a data-driven surrogate of the system~\eqref{eq:sys:continuous}, we pursue a bilinear approach based on generator EDMD (gEDMD), where the preservation of the control-affine structure is exploited \cite{williams2016extending, surana2016koopman, peitz:otto:rowley:2020}. 

Let $\mathbb{X} \subset \R^n$, $\mathbb{U} \subset \R^m$ be compact, non-empty with the origin in their interior. 
Moreover, let the $M$-dimensional subspace~$\mathbb{V} = \operatorname{span} \{\psi_p \in \mathcal{D}(\mathcal{L}^{u})  \mid p \in \{1, \dots, M\}\}$ be spanned by a dictionary of observable functions and $\Psi \coloneqq (\psi_1, \dots, \psi_M)^\top$ denote the vector-valued observable function where all $M$ observables are stacked. Consider the set $\mathcal{X} = \{x_1, \ldots, x_d\} \subset \mathbb{X}$ and assume that data points are given by 
\begin{align*}
    \Psi(\mathcal{X}) = \{\Psi(x_1), \dots, \Psi(x_d)\} \quad \text{and} \quad \mathcal{L}^k\Psi(\mathcal{X}) = \{(\mathcal{L}^k\Psi)(x_1), \dots, (\mathcal{L}^k\Psi)(x_d)\}
\end{align*}
for all $k \in \{0, \dots, m\}$, where we define $(\mathcal{L}^k\Psi)(x) := ((\mathcal{L}^k\psi_1)(x), \dots, (\mathcal{L}^k\psi_M)(x))^\top $ with $(\mathcal{L}^0\psi_p)(x_j) \hspace*{-1mm}=\hspace*{-1mm} \nabla \psi_p(x_j)^\top g_0(x_j)$ and $(\mathcal{L}^k\psi_p)(x_j) = \nabla \psi_p(x_j)^\top (g_0(x_j) + g_k(x_j))$. 
Assembling the data points in the matrices~$X, \hat{Y}^k \in \R^{M \times d}$ with 
\begin{align}\label{eq:generator:datamatrix}
    X = \Big[\Psi(x_1), \dots, \Psi(x_d)\Big]
    \quad \text{and} \quad 
    \hat{Y}^k = \Big[(\mathcal{L}^k\Psi)(x_1), \dots, (\mathcal{L}^k\Psi)(x_d)\Big],
\end{align}
an approximation of the compressed Koopman generator~$P_{\mathbb{V}}\mathcal{L}^k|_{\mathbb{V}}$ is given by
\begin{align} \label{eq:generator:regression}
    L^k = \argmin_{L \in \R^{M \times M}} \|LX - \hat{Y}^k\|_F^2.
\end{align}
For this proposed bilinear approach on gEDMD, a major disadvantage emerges, namely the need of data points pertaining to a selection of particular (constant) 
control inputs, e.g., the unit vectors of $\R^m$ and $u = 0$. 
Therefore, only certain, specifically crafted data sets can be used. 
The following part uses the method from Section~\ref{sec:problem-formulation} to allow flexible sampling while still obtaining a bilinear gEDMD-based surrogate model.
\\

\textbf{Bilinear gEDMD with flexible sampling.}
Let $\Psi \in \mathcal{C}^1(\mathbb{X}, \R^M)$ be locally Lipschitz-continuous with Lipschitz constant $L_\Psi > 0$. To be able to avoid restrictive sampling, an ideal data set, sufficient to set up and solve a regression problem as in \eqref{eq:generator:regression}, would be of the form
\begin{align}\label{eq:data:ideal:generator} 
% {eq:data:g:generator}
    (\psi_p(x_i), \nabla\psi_p(x_{i})^\top g_k(x_i)) , % \nabla\psi_p(x_{i})^\top G(x_i))
\end{align}
for $i \in \{1,\ldots,d\}$, $k \in \{0,\ldots,m\}$, and $p \in \{1, \dots, M\}$. 
 However, typical data sets only contain information about the observables' derivatives along the full dynamics, rather than along its components~$g_k$ that define the dynamics via the control-affine form. 
 Excitation of the system, as proposed in  Section~\ref{sec:problem-formulation}, provides sufficient information such that data of the form \eqref{eq:data:ideal:generator} can be approximated at points~$\mathcal{X} = \{x_1, \dots, x_d\} \subseteq \mathbb{X}$ that do not have to coincide with the sampled data and can be chosen arbitrarily. 
 Assume that the data is given by 
\begin{align}\label{eq:data:clustered}
    (\Psi(x_{ij}),u_{ij},\nabla\Psi(x_{ij})^\top f(x_{ij},u_{ij})) \in \mathcal{B}_{L_\Psi r_{x_i}}(\Psi(x_i)) \times \mathbb{U} \times \mathbb{R}^M
\end{align} 
for $i\in \{1, \dots, d\}$, $j \in \{0, \dots, d_i\}$ with $d_i \geq m$, and cluster radii~$r_{x_i} \geq 0$. 
Here, for each $x_i$ the data pairs~$(u_{ij},\nabla\Psi(x_{ij})^\top f(x_{ij},u_{ij}))_{j = 0}^{d_i}$ correspond to the pairs~$(y_j, u_j)_{j = 0}^d$ for $x$ from Section~\ref{sec:problem-formulation}. Following the proposed structure in Section~\ref{sec:excitation}, we aim to perform the regression~\eqref{eq:lsq} to approximate the points $\nabla\Psi(x_i)^\top g_k(x_i)$. Thereby, we set $V = V_i$ with 
\begin{align}\label{eq:g:V}
    V_i &\coloneqq \Bigg[\begin{array}{c|c|c}
         1 & \cdots & 1  \\
         u_{i1} & \cdots & u_{id_i}
    \end{array}\Bigg], \\
\begin{split}\label{eq:g:YV}
    Y &= \Big[\nabla\Psi(x_{i1})^\top f({x}_{i1}, u_{i1}) \mid  \cdots \mid \nabla\Psi(x_{i1})^\top f({x}_{id_i}, u_{id_i})\Big], 
    \text{ and } \quad \\
    \begin{bmatrix}
    \hat{g}_0 & \hat{G}
\end{bmatrix} &= \Big[ \tilde{Y}^0_i \mid \tilde{Y}^1_i \mid \cdots \mid \tilde{Y}^m_i \Big],
\end{split} 
\end{align}
where $\tilde{Y}^k_i \approx \nabla\Psi(x_{i})^\top{g}_k(x_i)$. We then define $\tilde{Y}^k = \Big[ \tilde{Y}_1^k \mid \cdots \mid \tilde{Y}_d^k \Big]$ and because of the preservation of control-affinity of the Koopman generator, $\tilde{Y}^k$ is an approximation of ${Y}^k \coloneqq \hat{Y}^k - Y^0$ for $k \in \{1, \dots, m\}$, where $Y^0 \coloneqq \hat{Y}^0$ and $\hat{Y}^k$ from~\eqref{eq:generator:datamatrix}. 

\begin{proposition}\label{cor:approx:error:generator}
    Let $\Psi \in \mathcal C^1(\mathbb{X}, \R^M)$ be an observable function and let its Jacobian matrix~$\nabla\Psi$ be locally Lipschitz continuous on $\mathbb{X}$. Further, let $\mathcal{X} = \{x_1, \dots, x_d\} \subset \R^n$ and data according to~\eqref{eq:data:clustered} be given such that $V_i \in \R^{(m + 1) \times d_i}$
    has a full row rank, i.e., $\operatorname{rank}(V_i) = m + 1$ for $i \in \{1, \dots, d\}$. Moreover, let the control inputs be arranged such that $U_m = \Big[u_{i1} \mid \cdots \mid u_{im}\Big]$ is invertible and let $\{\iota_1,\ldots,\iota_m\} = \{1, \dots, m\}$ such that $\|u_{\iota_1}\|\ge\cdots\ge\|u_{\iota_m}\|$, and set $I_j = \{1, \dots, m\}\backslash\{\iota_1,\ldots,\iota_j\} = \{\iota_p : p>j\}$. 
    Based on this, the subspaces~$S_{I_s}$ are defined by $ S_{I_s}= \operatorname{span}\{u_{ij} : j\in I_s\}$. Then, the solution~$\Big[ \tilde{Y}^0_i \mid \tilde{Y}^1_i \mid \cdots \mid \tilde{Y}^m_i \Big]$ of the linear regression problem~\eqref{eq:lsq} with parameters~\eqref{eq:g:YV} satisfies the error bound~\eqref{eq:bound:approx_Prop1}, i.e., 
    \begin{equation}\label{eq:bound:approx_Prop4.1}
        \left\lVert  Y^k - \tilde{Y}^k\right\rVert_\mathrm{max} 
        < r_\varepsilon \max_{i \in \{1, \dots, d\}}\sqrt{d_i} \tilde{\sigma}_i
    \end{equation}
    with $\tilde{\sigma}_i \coloneqq \Big(\Theta(U_m^{-1}u_{i0}) \cdot \min\{m + 1, \|u_{i\iota_m}\| \prod_{s = 1}^{m - 1}(1 - \cos\theta(u_{i\iota_s}, S_{I_s}))\}\Big)^{-\frac12}$  and $r_\varepsilon \coloneqq (L_{\Psi_{g_0}} + L_{\Psi_G} r_u)\max_{i \in \{1, \dots, d\}}r_{x_i}$ depending on constants~$L_{\Psi_{g_0}}, L_{\Psi_G} > 0$.
\end{proposition}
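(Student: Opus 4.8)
The plan is to treat each cluster index $i$ separately, to cast the associated local regression as an instance of the affine data-fitting problem of Section~\ref{sec:problem-formulation}, and then to combine the error bound of Proposition~\ref{prop:approx:error} with the subspace-angle estimate of Theorem~\ref{t:main}. Fixing $i\in\{1,\dots,d\}$, the control-affinity of $f$ lets me split each datum as
\begin{equation*}
    \nabla\Psi(x_{ij})^\top f(x_{ij},u_{ij}) = \nabla\Psi(x_i)^\top g_0(x_i) + \sum_{k=1}^m \nabla\Psi(x_i)^\top g_k(x_i)\,u_{ij,k} + \varepsilon_{ij},
\end{equation*}
so that the sought target is $T_i:=[\nabla\Psi(x_i)^\top g_0(x_i)\mid\cdots\mid\nabla\Psi(x_i)^\top g_m(x_i)]$ (with output dimension $M$ in place of $n$), the data matrix is $V_i$ from~\eqref{eq:g:V}, and the disturbance $\varepsilon_{ij}$ collects the mismatch between evaluating at the cluster centre $x_i$ and at the sample $x_{ij}$.

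Second, I would establish the disturbance bound $\lVert\varepsilon_{ij}\rVert\le r_\varepsilon$. Because $\mathbb{X}$ is compact and both $\nabla\Psi$ and each $g_k$ are locally Lipschitz, the products $x\mapsto\nabla\Psi(x)^\top g_0(x)$ and $x\mapsto\nabla\Psi(x)^\top[g_1(x)\mid\cdots\mid g_m(x)]$ are Lipschitz on $\mathbb{X}$; call their constants $L_{\Psi_{g_0}}$ and $L_{\Psi_G}$. Writing $\varepsilon_{ij}=[\nabla\Psi(x_{ij})^\top g_0(x_{ij})-\nabla\Psi(x_i)^\top g_0(x_i)]+\sum_{k=1}^m[\nabla\Psi(x_{ij})^\top g_k(x_{ij})-\nabla\Psi(x_i)^\top g_k(x_i)]\,u_{ij,k}$ and using $\lVert x_{ij}-x_i\rVert\le r_{x_i}$ together with $\lVert u_{ij}\rVert\le r_u$ on the control-weighted block yields $\lVert\varepsilon_{ij}\rVert\le(L_{\Psi_{g_0}}+L_{\Psi_G}r_u)r_{x_i}\le r_\varepsilon$, which is exactly the $r_\varepsilon$ declared in the statement.

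With the problem recast, Proposition~\ref{prop:approx:error} applies verbatim and gives, for each $i$, the per-cluster estimate $\lVert T_i-[\tilde Y^0_i\mid\cdots\mid\tilde Y^m_i]\rVert_\mathrm{max}\le r_\varepsilon\sqrt{d_i}\,/\,\sigma_\mathrm{min}(V_i)$, the factor $\sqrt{d_i}$ being the number of columns of $V_i$. To control the denominator I would invoke Theorem~\ref{t:main}, whose hypotheses ($U_m$ invertible, the norm ordering $\lVert u_{i\iota_1}\rVert\ge\cdots\ge\lVert u_{i\iota_m}\rVert$, and the subspaces $S_{I_s}$) are precisely those assumed here; when $d_i>m+1$ I would first drop the surplus columns using Remark~\ref{rem:extending_V}, which can only decrease $\sigma_\mathrm{min}$, and apply Theorem~\ref{t:main} to the $(m+1)$-column submatrix of $V_i$ associated with the inputs $u_{i0},u_{i1},\dots,u_{im}$. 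This bounds $\sigma_\mathrm{min}(V_i)$ below by the reciprocal of $\tilde\sigma_i$, so the per-cluster error is at most $r_\varepsilon\sqrt{d_i}\,\tilde\sigma_i$.

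Finally, since $Y^k$ and $\tilde Y^k$ are the column-wise concatenations over $i$ of the cluster targets and cluster estimates, the max-norm of the global error equals the maximum over $i$ of the per-cluster max-norm errors; passing to $\max_i$ and noting that $r_\varepsilon$ is common to all clusters (it already carries $\max_i r_{x_i}$) delivers the asserted bound~\eqref{eq:bound:approx_Prop4.1}. The main obstacle is the second step: one must confirm that the products $\nabla\Psi(\cdot)^\top g_k(\cdot)$ are genuinely Lipschitz on $\mathbb{X}$ --~which hinges on boundedness of each factor over the compact set~-- and carry the control weighting $u_{ij,k}$ through correctly so that the $r_u$-dependence surfaces as the $L_{\Psi_G}r_u$ term; once Proposition~\ref{prop:approx:error} and Theorem~\ref{t:main} are in hand, the remaining steps are routine bookkeeping.
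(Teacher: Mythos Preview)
Your proposal is correct and follows essentially the same approach as the paper's own proof: recast each cluster as an instance of the affine data-fitting problem, verify that the products $\nabla\Psi(\cdot)^\top g_k(\cdot)$ are Lipschitz on the compact set~$\mathbb X$ to obtain the disturbance bound, apply Proposition~\ref{prop:approx:error} for the per-cluster error, then use Remark~\ref{rem:extending_V} together with Theorem~\ref{t:main} to bound $\sigma_\mathrm{min}(V_i)^{-1}$ by $\tilde\sigma_i$, and finally pass to the maximum over~$i$. Your write-up is in fact slightly more explicit than the paper's in displaying the form of~$\varepsilon_{ij}$ and justifying the $L_{\Psi_G}r_u$ term.
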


\begin{proof}
As $\nabla \psi_p$ and $g_k$ are Lipschitz-continuous functions for all $p \in \{1, \dots, M\}$ and $k \in \{0, \dots, m\}$, the term~$\nabla \Psi \cdot g_k$ again is Lipschitz-continuous on $\mathbb{X}$. We denote the Lipschitz-constant of $\nabla \Psi \cdot g_0$ as $L_{\Psi_{g_0}}\coloneqq L_{\Psi_{g_0}}(\mathbb{X})$ and the maximum of $\nabla \Psi \cdot g_k$ for $k \in \{1, \dots, m\}$ as $L_{\Psi_G} \coloneqq L_{\Psi_G}(\mathbb{X})$. 
Then, applying Proposition~\ref{prop:approx:error} yields an error bound 
\begin{align} \label{eq:error:}
         \left\lVert \Big[ {Y}^0_i \mid {Y}^1_i \mid \cdots \mid {Y}^m_i \Big] - \Big[ \tilde{Y}^0_i \mid \tilde{Y}^1_i \mid \cdots \mid \tilde{Y}^m_i \Big]\right\rVert_\mathrm{max} 
        < r_\varepsilon \frac{\sqrt{d_i}}{\sigma_{\min}(V_i)} 
    \end{align}
with $r_\varepsilon := (L_{\Psi_{g_0}} + L_{\Psi_G}r_u)r_x$.
The term $\sigma_{\min}(V_i)$ can now be addressed using our findings from Section~\ref{sec:excitation}: Due to the full row rank, we can assume that the matrix $V_i$ can be written as $V_i = \Big[\tilde{V}_i \mid \bar{V}_i\Big]$ with matrix~$\tilde{V}_i = \begin{bmatrix}\mathbbm{1}_{m+1}^\top\\U_{m+1}\end{bmatrix} \in \R^{m + 1\times m + 1}$ with $U_{m+1} = \big[ {u}_{i0} \mid \cdots \mid {u}_{im} \big]$ such that $U_m \coloneqq \big[ {u}_{i1} \mid \cdots \mid {u}_{im} \big]$ is invertible. 
Following Remark~\ref{rem:extending_V}, we find $\sigma_\mathrm{min}(V_i) \geq \sigma_\mathrm{min}(\tilde{V_i})$ 
and thus with Theorem~\ref{t:main} 
\begin{align*}
   \sigma_{\min} (V_i)^{-1} &\leq \sigma_\mathrm{min}(\tilde{V}_i)^{-1} \\
    &\leq \Big(\Theta(U_m^{-1}u_{i0}) \cdot \min\{m + 1, \|u_{i\iota_m}\| \prod_{s = 1}^{m - 1}(1 - \cos\theta(u_{i\iota_s}, S_{I_s}))\}\Big)^{-\frac12}\eqqcolon \tilde{\sigma}_i
\end{align*}
for the subspaces~$S_{I_s} = \operatorname{span}\{u_{ij} : j\in I_s\}$.
Assembled, this yields~\eqref{eq:bound:approx_Prop4.1}.
\end{proof}

\begin{remark}[Clustering]\label{rem:clustering}
In many applications, data is naturally available as triplets of the form
\[
(\Psi(\bar{x}_p), \bar{u}_p, \nabla\Psi(\bar{x}_p)^\top f(\bar{x}_p,\bar{u}_p))
\in \mathbb{X}\times\mathbb{U}\times\mathbb{R}^M,\quad p\in\mathbb{N}.
\]
To obtain a data set of the form~\eqref{eq:data:clustered} used in this work, 
one may choose a finite set of representative states 
$\mathcal{X}=\{x_1,\dots,x_d\}\subset\mathbb{X}$ 
and cluster the available samples by proximity in the observable space, 
assigning $(\bar{x}_p,\bar{u}_p)$ to the cluster of $x_i$ if 
$\|\Psi(\bar{x}_p)-\Psi(x_i)\|\le r_{x_i}$. 
Indexing samples in the $i$th cluster by $j$ yields data points
\[
(\Psi(x_{ij}),u_{ij},\nabla\Psi(x_{ij})^\top f(x_{ij},u_{ij})) 
\in \mathcal{B}_{L_\Psi r_{x_i}}(\Psi(x_i))\times\mathbb{U}\times\mathbb{R}^M,
\]
which provides the structured data set required for the analysis.
\end{remark}

\begin{remark}[EDMD for discrete-time systems]\label{rem:EDMD:operator}
    We consider the discrete-time control-affine system
\begin{align}\label{eq:system:DT}
    x^+ = F(x, u) = g_0(x) + G(x)u = g_0(x) + \sum\limits_{k = 1}^m {g}_k(x) u_k
\end{align}
with nonlinear locally Lipschitz maps $g_0: \mathbb{X} \rightarrow \mathbb{R}^n$ and $G:\mathbb X\rightarrow \mathbb R^{n\times m}$. 
Such systems are often derived from continuous-time systems \eqref{eq:sys:continuous} by discretization. Using a Taylor expansion, we obtain a discrete-time system up to an error of order~$\mathcal{O}(\Delta t^2)$, see \cite[Remark 4.1]{BoldPhil24}. \\
Analogously to the generator setting, using the excitation-based approach from Section~\ref{sec:excitation}, the components 
$g_0(x_i), G(x_i)$ for $x_i \in \mathcal{X}$ for all $i \in \{1, \dots, d\}$ can be approximated from data of the form 
\begin{align}\label{eq:data:operator}
    (x_{ij},u_{ij},F(x_{ij},u_{ij})) \in  \mathcal{B}_{r_{x_i}}(x_i)
    \times \mathbb{U} \times \mathbb{R}^n
\end{align} 
for $i\in [1:d]$ and $j \in [1:d_i]$ with $d_i \geq m + 1$, $r_{x_i} \geq 0$ and a set of points~$\mathcal{X} = \{x_1, \dots, x_d\} \subseteq \mathbb{X}$. \\
This enables the construction of an artificial sample set~
$(x_i,F(x_i,e_k))$ for $e_k$ being the $k$-th unit vector of $R^m$, $k \in \{1, \dots, m\}$, and $e_0 = 0$ using~\eqref{eq:system:DT}. 
Now, for an observable vector $\Psi = (\psi_1, \dots, \psi_M)^\top$ and $k\in\{0,1,\dots,m\}$, the matrices 
\begin{align*}
    X = \begin{bmatrix}
        \Psi(x_1) & \dots & \Psi(x_d) \\
    \end{bmatrix} \quad \text{and} \quad 
    Y^k = \begin{bmatrix}
        \Psi(F^\varepsilon(x_1, e_k)) & \dots & \Psi(F^\varepsilon(x_d, e_k))
    \end{bmatrix}
\end{align*}
 are assembled, where $F^\varepsilon(x_i, e_k)$ stands for the approximated values of $F(x_i, e_k)$. Then an approximation of the compressed Koopman operator~$P_{\mathbb{V}}\mathcal{K}^i|_{\mathbb{V}}$, $i\in\{0,1,\dots,m\}$, on the space~$\mathbb{V} = \operatorname{span}\{\psi_p \mid p \in \{1, \dots ,M\}\}$ is given by the solution of the regression problem 
\begin{align} \label{eq:operator:regression}
    K^i = \argmin_{K \in \R^{M \times M}} \|KX - Y^i\|_F^2.
\end{align}
    
\end{remark}

\subsection{Kernel EDMD with flexible sampling} \label{subsec:kernel:EDMD}
Kernel EDMD~\cite{klus2020kernel} yields an approximation of the Koopman operator based on a data-dependent dictionary, where only the kernel has to be chosen.
In this subsection, we recall the results from \cite[Section~4]{BoldPhil24} where an extension of kernel EDMD (kEDMD) to control-affine systems accompanied by bounds on the full approximation error was introduced.

Let the function~$\k:\R^n \times \R^n \rightarrow \R$ be a symmetric and strictly positive definite kernel function, i.e., for all sets of states $\mathcal{X} = \{x_1, \dots, x_d\} \subset \R^n$, the corresponding kernel matrix~$K_\mathcal{X} = \left(\k(x_i, x_j)\right)_{i, j = 1}^d$ is positive definite. For $x \in \R^n$, we define the \textit{canonical features} of $\k$ as ~$\Phi_x:\R^n \rightarrow \R$ with $y \mapsto \k(x, y)$ for all $y \in \R^n$.
The completion of $\operatorname{span}\{\Phi_x \mid x \in \R^n\}$ yields a Hilbert space~$\mathbb{H}$ induced by $\k$ with inner product~$\langle \cdot, \cdot\rangle_\mathbb{H}$. The characteristic of~$\mathbb{H}$ includes that its elements~$f$ fulfill 
\begin{equation}\label{eq:RKHS:reproducing-property}\tag{reproducing property}
    f(x) = \langle f, \k(x, \cdot) \rangle_\mathbb{H} = \langle f, \Phi_x(\cdot) \rangle_\mathbb{H} \qquad\forall\, x \in \mathbb{R}^n.
\end{equation}
Hence, $\mathbb{H}$ is referred to as \textit{reproducing kernel Hilbert space} (RKHS).\\

\begin{remark}[Wendland Kernels] \label{rem:Wendland}
    A suitable choice of kernel functions is based on the Wendland radial basis functions~$\Phi_{n, k}:\R^n \rightarrow \R$ with smoothness degree~$k \in \N$ from \cite{wendland2004approximate}. These functions induce a piecewise-polynomial and compactly-supported kernel function with 
    \begin{align*}
        \k(x, y) \coloneqq \Phi_{n, k}(x - y) = \phi_{n, k}(\|x - y\|) 
    \end{align*}
    for $x, y \in \R^n$ and $\phi_{n, k}$ defined as in \cite[Table 9.1]{wendland2004approximate}.
    Note that the corresponding RKHS induced by the Wendland kernels on a bounded domain~$\Omega \subset \R^n$ with Lipschitz boundary coincides with a fractional Sobolev space. 
\end{remark}

\textit{Kernel EDMD} embedded in a suitable RKHS~$\mathbb{H}$ leverages kernel methods to define a high-dimensional data-informed feature space on which the Koopman operator can be approximated, see, e.g., \cite{will2015kernel,klus2020kernel}. 
Consider an autonomous discrete-time system given by 
\begin{align}\label{eq:sys:ADT}%\tag{ADT}
    x^+ = F(x)
\end{align}
with locally Lipschitz continuous map~$F:\mathbb{X} \rightarrow \R^n$ and where $\mathbb{X} \subset \R^n$ is an open and bounded set with Lipschitz boundary containing the origin in its interior. In the following, we assume that the RKHS~$\mathbb{H}$ induced by the kernel function~$\k$ is invariant w.r.t.\ the Koopman operator, i.e., $\mathcal{K}\mathbb{H} \subseteq \mathbb{H}$. An example of kernel functions fulfilling this property are Wendland kernels (see Remark~\ref{rem:Wendland}).\\
For a set of pairwise distinct data points~$\mathcal{X} = \{x_1, \dots, x_d\}\subset\R^n$, $d \in \N$, we define the $d$-dimensional set~$V_\mathcal{X} = \operatorname{span}\{\Phi_{x_1}, \dots, \Phi_{x_d}\}$. As proven in \cite[Proposition 3.2]{kohne2025error}, a kEDMD approximant~$\widehat{K}$ of the Koopman operator~$\mathcal{K}$ is given by the orthogonal projection~$P_{V_\mathcal{X}}$ of $\mathbb{H}$ onto $V_\mathcal{X}$, i.e., the compression of the  Koopman operator~$P_\mathcal{X}\mathcal{K}|_{V_{\mathcal{X}}}$, 
with matrix approximant 
\begin{align*}
    \widehat{K} = K_\mathcal{X}^{-1}K_{F(\mathcal{X})}K_\mathcal{X}^{-1} \in \R^{d \times d},
\end{align*}
see \cite{kohne2025error} with $K_{F(\mathcal{X})} = (\k(F(x_i), x_j))_{i, j = 1}^d \in \R^{d \times d}$. 
For 
$\mathbf{k}_\mathcal{X} = (\Phi_{x_1}, \dots, \Phi_{x_d})^\top$ and $\psi_{F(\mathcal{X})} = (\psi(F(x_1)), \dots, \psi(F(x_d)))^\top$, the data-driven surrogate dynamics~$F^\varepsilon$ is then given by 
\begin{align}\label{eq:surr}
    \psi(F(x)) \approx \psi(F^\varepsilon(x)) \coloneqq \psi_\mathcal{X}^\top \widehat{K}^\top\mathbf{k}_\mathcal{X}(x)
\end{align}
with $\psi:\R^n \rightarrow \R$.
Hereby, $\varepsilon$ in the notation of the surrogate right-hand side stands for the approximation quality of the kEDMD-based model. In~\cite[Theorem~5.2]{kohne2025error}, a bound on the full approximation error is derived, where the approximation accuracy of the kEDMD-based surrogate depends on the fill distance
\begin{align*}
    h_\mathcal{X} := \sup_{x \in \mathbb{X}} \min_{x_i \in \mathcal{X}} \|x - x_i \|
\end{align*}
of $\mathcal{X}$ in $\mathbb{X}$. The fill distance, measuring how well the set of target points covers the space~$\mathbb{X}$, is a common measure in machine learning and approximation theory and  plays a major role when considering error bounds. Theorem~\ref{thm:koehne} recalls said result. 
\begin{theorem}
    \label{thm:koehne}
    Let $\mathbb{H}$ be the RKHS on~$\mathbb{X}$ generated by the Wendland kernels with smoothness degree $k \in \N$. 
    The right-hand side of system~\eqref{eq:sys:ADT} shall be given by ${F \in \mathcal{C}^p_b(\mathbb{X},\mathbb{R}^n)}$ , with $p = \lceil \frac{n + 1}{2} + k\rceil$, where $C_b^p$ denotes the space of bounded $p$-times continuously differentiable functions. 
    Then, there exist constants $C,h_0 > 0$ such that the bound on the full approximation error 
    \begin{align}\label{eq:error_bound}
        \| \mathcal{K} - \widehat{K}\|_{\mathbb{H} \rightarrow \mathcal{C}_b({\mathbb{X}},\R^n)} \le Ch_\mathcal{X}^{k+1 \mathbin{/} 2}
    \end{align}
    holds for all sets $\mathcal{X} :=\{ x_i \mid i \in \{1, \dots, d\} \} \subset \mathbb{X}$, $d \in \mathbb{N}$, of %finitely-many 
    pairwise-distinct data points with fill distance~$h_{\mathcal{X}}$, $h_{\mathcal{X}} \leq h_0$. 
\end{theorem}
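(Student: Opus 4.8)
The plan is to reduce the operator-norm estimate to a scattered-data interpolation bound in the native space of the Wendland kernel. First I would invoke the characterization already recorded in Remark~\ref{rem:Wendland}: the RKHS~$\mathbb{H}$ generated by the Wendland kernel $\Phi_{n,k}$ on $\mathbb{X}$ is norm-equivalent to the fractional Sobolev space $H^\tau(\mathbb{X})$ with smoothness index $\tau = k + \tfrac{n+1}{2}$, with equivalence constants independent of the node set~$\mathcal{X}$. This identification is what couples the purely geometric quantity $h_\mathcal{X}$ to the analytic convergence rate, since $\tau - n/2 = k + 1/2$ is exactly the exponent in~\eqref{eq:error_bound}.

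Next I would exploit that the kEDMD approximant is the compression $\widehat{K} = P_{V_\mathcal{X}}\mathcal{K}P_{V_\mathcal{X}}$ built from the $\mathbb{H}$-orthogonal projection onto $V_\mathcal{X} = \operatorname{span}\{\Phi_{x_1},\dots,\Phi_{x_d}\}$, and that by the reproducing property this projection coincides with kernel interpolation at the nodes~$\mathcal{X}$. For a fixed observable $f\in\mathbb{H}$ the error then decomposes as
\[
    \mathcal{K}f - \widehat{K}f = (\mathcal{I}-P_{V_\mathcal{X}})\mathcal{K}f + P_{V_\mathcal{X}}\mathcal{K}(\mathcal{I}-P_{V_\mathcal{X}})f,
\]
so that each summand is driven by an interpolation residual $(\mathcal{I}-P_{V_\mathcal{X}})g$ of some $g\in\mathbb{H}$. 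The regularity hypothesis $F\in\mathcal{C}_b^p$ with $p=\lceil\tau\rceil = \lceil\tfrac{n+1}{2}+k\rceil$ together with the invariance $\mathcal{K}\mathbb{H}\subseteq\mathbb{H}$ should guarantee, via a chain-rule argument bounding Sobolev norms of compositions, that $\mathcal{K}:\mathbb{H}\to\mathbb{H}$ is bounded; this controls $\|\mathcal{K}f\|_{\mathbb{H}}$ and $\|\mathcal{K}(\mathcal{I}-P_{V_\mathcal{X}})f\|_{\mathbb{H}}$ by $C\|f\|_{\mathbb{H}}$.

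The analytic heart of the argument is the sampling inequality for Sobolev functions from Wendland's scattered-data theory~\cite{wendland2004approximate}: there exist $C,h_0>0$ such that every $g\in H^\tau(\mathbb{X})$ vanishing on $\mathcal{X}$ satisfies $\|g\|_{L^\infty(\mathbb{X})}\le C\,h_\mathcal{X}^{\tau-n/2}\|g\|_{H^\tau(\mathbb{X})}$ whenever $h_\mathcal{X}\le h_0$. Applying this to the two residuals above --- using the $\mathbb{H}$-contractivity $\|(\mathcal{I}-P_{V_\mathcal{X}})g\|_{\mathbb{H}}\le\|g\|_{\mathbb{H}}$ and the boundedness of $\mathcal{K}$ --- collapses both terms to $C\,h_\mathcal{X}^{k+1/2}\|f\|_{\mathbb{H}}$; taking the supremum over $\|f\|_{\mathbb{H}}\le1$ then yields~\eqref{eq:error_bound}.

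I expect the main obstacle to be the second summand $P_{V_\mathcal{X}}\mathcal{K}(\mathcal{I}-P_{V_\mathcal{X}})f$, because $P_{V_\mathcal{X}}$ is $\mathbb{H}$-orthogonal but not uniformly $L^\infty$-bounded in $\mathcal{X}$, so it cannot simply be dropped when passing to the $\mathcal{C}_b$-norm. The remedy is to keep the estimate inside $\mathbb{H}$ as long as possible, invoking the sampling inequality only on a residual that genuinely vanishes on $\mathcal{X}$; this is exactly where the interplay of the interpolation characterization of $P_{V_\mathcal{X}}$, the reproducing property, and the smoothness budget $p=\lceil\tfrac{n+1}{2}+k\rceil$ becomes delicate. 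Establishing the $\mathcal{X}$-uniform norm equivalence $\mathbb{H}\cong H^\tau$ and the fill-distance threshold $h_0$ are the remaining technical points; all of these ingredients, and hence the full statement, are furnished by~\cite[Theorem~5.2]{kohne2025error}, which this theorem recalls.
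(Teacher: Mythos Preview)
The paper does not prove this theorem at all: it is explicitly introduced with ``Theorem~\ref{thm:koehne} recalls said result'' and attributed to \cite[Theorem~5.2]{kohne2025error}. You correctly identify this in your final sentence, so in that narrow sense your proposal matches the paper's treatment.

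Your sketch of the underlying argument is a reasonable outline of the kind of proof one expects for such a statement (Sobolev--RKHS identification via Remark~\ref{rem:Wendland}, kernel interpolation as $\mathbb{H}$-orthogonal projection, a sampling inequality from~\cite{wendland2004approximate}), but since the paper offers no proof to compare against, there is nothing further to contrast. If you intend this as a standalone proof rather than a citation, the second summand $P_{V_\mathcal{X}}\mathcal{K}(\mathcal{I}-P_{V_\mathcal{X}})f$ is indeed the delicate point you flag, and you have not actually resolved it: you would need either an $L^\infty$-stability bound for $P_{V_\mathcal{X}}$ uniform in $\mathcal{X}$, or to rewrite this term so that the sampling inequality applies directly, and neither is supplied. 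For the purposes of this paper, simply citing \cite[Theorem~5.2]{kohne2025error} is the intended and sufficient argument.
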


\textbf{Extension to control-affine systems}.
For discrete-time control-affine systems given by~\cref{eq:system:DT}, \cite[Section 4]{BoldPhil24} provide a data-driven kernel EDMD extension. 
The main idea is to use the autonomous kernel EDMD method to approximate the functions~$g_0, \dots, g_m$ and then insert them into a control affine form such as \eqref{eq:system:DT}. 
To realize this approximation, data points of the form~$(x_i, g_k(x_i))$ are needed, where $k \in \{0, \dots, m\}$ and $i \in \{1, \dots, d\}$. Analogously to bilinear EDMDc in Remark~\ref{rem:EDMD:operator}, under the assumption that we are given data points \begin{align}\label{eq:data:kEDMD}
(x_{ij},u_{ij},F(x_{ij},u_{ij})) \in  B_{r_{x_i}}(x_i) \times \mathbb{U} \times \mathbb{R}^n
\end{align} 
for $i\in \{1, \dots, d\}$, $j \in \{0, \dots, d_i\}$ with $d_i \geq m$, and radii~$r_{x_i} \geq 0$, we can compute approximations~$\tilde{g}_k(x_i)$ of the data points~$g_k(x_i)$ for $k \in \{ 0, \dots, m\}$. \\ 
The coefficients for the data-driven surrogate~$F^\varepsilon$ are then computed analogously to the autonomous case~\eqref{eq:surr}, which leads to the following propagation step
\begin{align*}
    \psi(F(x, u)) \approx \psi(F^\varepsilon(x, u)) \coloneqq \psi_\mathcal{X}^\top \Big(\widehat{K}_0 + \sum_{k = 1}^m \widehat{K}_k u_k\Big)^\top \mathbf{k}_\mathcal{X}(x)
\end{align*}
with $\widehat{K}_k = K_\mathcal{X}^{-1}K_{\tilde{g}_k(\mathcal{X})}K_\mathcal{X}^{-1}$, $K_{\tilde{g}_k(\mathcal{X})} = (\k(\tilde{g}_k(x_i), x_j))_{i, j = 1}^{d}$, for all $k \in \{0, \dots, m\}$. 
We then obtain the full state-space surrogate model 
\begin{align*}
    x^+ = F^\varepsilon(x, u)
\end{align*}
by using the observable functions~$\psi_\ell(x) = e_\ell^\top x$, i.e., the $\ell$th coordinate function for $\ell \in \{1, \dots, n\}$.\\
The uniform error bound on the full approximation error in Theorem~\ref{thm:control_main} extends the results from~\cite[Thm.\ 3]{schimperna2025data} including our findings from Theorem~\ref{t:main} in  Section~\ref{sec:excitation}.

\begin{theorem}[Approximation error]\label{thm:control_main}
  Let $k \geq 1$ be the smoothness degree of the Wendland kernel. 
  Further, let $\mathcal{X} = \{x_1, \dots, x_d\} \subset \R^n$ and data according to~\eqref{eq:data:kEDMD} be given such that $V_i$ defined by \eqref{eq:g:V} 
    has full row rank, i.e., $\operatorname{rank}(V_i) = m + 1$ for $i \in \{1, \dots, d\}$.
    Moreover, let the control inputs be arranged such that $U_m = \big[u_{i1} \mid \cdots \mid u_{im}\big]$ is invertible and let $\{\iota_1,\ldots,\iota_m\} = \{1, \dots, m\}$ such that $\|u_{\iota_1}\|\ge\cdots\ge\|u_{\iota_m}\|$, and set $I_j = \{1, \dots, m\}\backslash\{\iota_1,\ldots,\iota_j\} = \{\iota_p : p>j\}$.       
    Based on this, the subspaces~$S_{I_s}$ are defined by $ S_{I_s}= \operatorname{span}\{u_{ij} : j\in I_s\}$.
    Then, there exist constants $C_1, C_2, h_0 > 0$ such that the error bound 
\begin{equation}\label{eq:kEDMD:error bound}
        \|F(x,u) - F^\varepsilon(x,u)\|_\infty \leq C_1 h_\mathcal{X}^{k+1/2} + C_2 c \|K_\mathcal{X}^{-1}\|r_\mathcal{X}\tilde{\sigma}
    \end{equation}
    holds for all $(x,u) \in \mathbb{X} \times \mathbb{U}$, where the data is given by \eqref{eq:data:kEDMD} with fill distance~$h_\mathcal{X}$ and cluster radius~$r_{\mathcal{X}} \coloneqq \max\{r_{x_i} \mid i \in \{1, \dots, d\}\}$ satisfying $h_\mathcal{X} \leq h_0$ and $r_{\mathcal{X}} < h_{\mathcal{X}}/2$, respectively. Thereby, the error bound depends on the cluster size $r_\mathcal{X}$, on the fill distance $h_\mathcal{X}$, on $\tilde{\sigma} > 0$ with 
    \begin{align*}
        \tilde{\sigma}_i \coloneqq \Big(\Theta(U_m^{-1}u_{i0}) \cdot \min\{m + 1, \|u_{i\iota_m}\| \prod_{s = 1}^{m - 1}(1 - \cos\theta(u_{i\iota_s}, S_{I_s}))\}\Big)^{-\frac12},
    \end{align*} 
    and on a constant~$c$ defined as $c := \Phi_{n,k}^{1/2}(0)\big( \max_{v \in \mathbb{R}^d: \| v \|_\infty \leq 1} v^\top {K}_\mathcal{X}^{-1}v \big)^{1/2}$.
\end{theorem}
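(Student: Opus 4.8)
The plan is to isolate the effect of using the \emph{approximate} nodal values $\tilde g_k(x_i)$ in place of the exact values $g_k(x_i)$. To this end I introduce the idealized surrogate $\bar F(x,u) = \bar g_0(x) + \sum_{k=1}^m \bar g_k(x)\,u_k$, where each $\bar g_k$ is the autonomous kEDMD surrogate of the map $g_k\colon\mathbb X\to\mathbb R^n$ built from the \emph{exact} samples $g_k(x_i)$, i.e.\ obtained by replacing $K_{\tilde g_k(\mathcal X)}$ with $K_{g_k(\mathcal X)}$ in $\widehat K_k$. Writing the actual surrogate as $F^\varepsilon(x,u)=g_0^\varepsilon(x)+\sum_{k=1}^m g_k^\varepsilon(x)\,u_k$ with $g_k^\varepsilon$ the kEDMD surrogate built from $\tilde g_k(x_i)$, both $F^\varepsilon$ and $\bar F$ are affine in $u$, so the triangle inequality gives, uniformly over $(x,u)\in\mathbb X\times\mathbb U$,
\[
\|F(x,u) - F^\varepsilon(x,u)\|_\infty \le \underbrace{\|F(x,u) - \bar F(x,u)\|_\infty}_{\text{(I)}} + \underbrace{\|\bar F(x,u) - F^\varepsilon(x,u)\|_\infty}_{\text{(II)}},
\]
and I would bound the two terms separately.

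For term~(I) I observe that $\bar K_k$ is precisely the autonomous kEDMD operator of the exact map $g_k$, so $\bar g_k$ is its kEDMD surrogate. Applying Theorem~\ref{thm:koehne} to the coordinate observables $\psi_\ell(x)=e_\ell^\top x$ (which lie in $\mathbb H$ on the bounded domain $\mathbb X$, whence $\|\psi_\ell\|_\mathbb H$ is a finite constant) yields $\|g_k-\bar g_k\|_\infty \le C\,h_\mathcal X^{k+1/2}$ for every $k\in\{0,\dots,m\}$, provided the $g_k$ inherit the required smoothness. Since $F-\bar F=\sum_{k=0}^m (g_k-\bar g_k)u_k$ with $u_0\equiv 1$ and $|u_k|\le r_u$, summation gives $\text{(I)}\le(1+m\,r_u)\,C\,h_\mathcal X^{k+1/2}=:C_1\,h_\mathcal X^{k+1/2}$. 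The condition $r_\mathcal X<h_\mathcal X/2$ enters here to ensure that the cluster centers $\mathcal X$ form an admissible sample geometry of fill distance $h_\mathcal X$ for this estimate.

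For term~(II) I would first control the nodal errors. By construction $\tilde g_k(x_i)$ arises from the constrained affine regression~\eqref{eq:lsq} on the clustered data~\eqref{eq:data:kEDMD} with data matrix $V_i$ from~\eqref{eq:g:V}. Exactly as in the proof of Proposition~\ref{cor:approx:error:generator}, combining the residual bound of Proposition~\ref{prop:approx:error} with the subspace-angle estimate of Theorem~\ref{t:main} (and Remark~\ref{rem:extending_V} to reduce $V_i$ to its leading $(m{+}1)\times(m{+}1)$ block) yields $\max_i\|g_k(x_i)-\tilde g_k(x_i)\| \lesssim r_\mathcal X\,\tilde\sigma$, where the $\sqrt{d_i}$ and the Lipschitz constants of $g_0,G$ are absorbed into the constant and the bound $r_\varepsilon\le C\,r_\mathcal X$ follows from the clustered samples lying within distance $r_{x_i}\le r_\mathcal X$ of $x_i$, as in Section~\ref{sec:problem-formulation} and Remark~\ref{rem:EDMD:operator}.

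It remains to propagate these nodal errors through the kEDMD construction, which I expect to be the main obstacle. The operator difference is $\bar K_k - \widehat K_k = K_\mathcal X^{-1}\big(K_{g_k(\mathcal X)}-K_{\tilde g_k(\mathcal X)}\big)K_\mathcal X^{-1}$, and by Lipschitz continuity of the Wendland kernel the perturbation $\Delta_k:=K_{g_k(\mathcal X)}-K_{\tilde g_k(\mathcal X)}$ has entries bounded by $L_\k\max_i\|g_k(x_i)-\tilde g_k(x_i)\|$. Since $\bar g_k(\cdot)_\ell-g_k^\varepsilon(\cdot)_\ell\in V_\mathcal X$, the reproducing property $|s(x)|=|\langle s,\Phi_x\rangle_\mathbb H|\le\Phi_{n,k}^{1/2}(0)\,\|s\|_\mathbb H$ supplies the factor $\Phi_{n,k}^{1/2}(0)$, while the remaining RKHS-norm estimate, taken in the maximum norm over the coordinates $\psi_\ell$, produces the factor $\big(\max_{\|v\|_\infty\le 1} v^\top K_\mathcal X^{-1} v\big)^{1/2}$ together with one further power of $\|K_\mathcal X^{-1}\|$ coming from $\Delta_k$ sandwiched between the two inverses --- i.e.\ precisely the constant $c$ and the prefactor $\|K_\mathcal X^{-1}\|$ in~\eqref{eq:kEDMD:error bound}. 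The delicate points are tracking the transposes, choosing the correct norm pairing (spectral versus maximum norm) so that the normalization $\|v\|_\infty\le 1$ appears, and verifying that the linear-in-$u$ summation contributes only a constant factor $(1+m\,r_u)$ absorbed into $C_2$. Collecting (I) and (II) then yields the claimed bound~\eqref{eq:kEDMD:error bound}.
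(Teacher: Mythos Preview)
Your decomposition into the ``ideal'' kEDMD surrogate $\bar F$ and the perturbed one $F^\varepsilon$ is correct and is precisely the route taken in the reference \cite[Thm.~3]{schimperna2025data}, which the paper invokes as a black box. The paper's own proof is therefore much shorter than yours: it directly quotes from that reference the intermediate bound
\[
\|F(x,u) - F^\varepsilon(x,u)\|_\infty \le C_1 h_\mathcal X^{k+1/2} + C_2\, c\, \|K_\mathcal X^{-1}\|\, r_\mathcal X\, \max_{i}\|V_i^\dagger\|,
\]
and then its only new contribution is to bound $\|V_i^\dagger\| = \sigma_{\min}(V_i)^{-1}$ via Remark~\ref{rem:extending_V} and Theorem~\ref{t:main}, exactly as in the proof of Proposition~\ref{cor:approx:error:generator}. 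In other words, your term~(II) analysis --- nodal error via Proposition~\ref{prop:approx:error} and propagation through the sandwich $K_\mathcal X^{-1}(K_{g_k(\mathcal X)}-K_{\tilde g_k(\mathcal X)})K_\mathcal X^{-1}$ with the reproducing property yielding the factor~$c$ --- is what the cited theorem already packages, and your term~(I) via Theorem~\ref{thm:koehne} applied componentwise is also what lies behind the $C_1 h_\mathcal X^{k+1/2}$ term there. So your proposal is correct but re-derives material the paper outsources; the only part that is genuinely proved \emph{in this paper} is the final step $\sigma_{\min}(V_i)^{-1}\le\tilde\sigma_i$, which you handle identically.
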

\begin{proof}
Following the proof of \cite[Thm. 3]{schimperna2025data}, we get an estimation of the approximation error by 
\begin{align*}
    \|F(x,u) - F^\varepsilon(x,u)\|_\infty \leq C_1 h_\mathcal{X}^{k+1/2} + C_2 c \|K_\mathcal{X}^{-1}\|r_\mathcal{X} \cdot \max_{i \in [1:d]}\|V_i^\dagger\|
\end{align*}
for constants $C_1, C_2 > 0$ and where $c \coloneqq \Phi_{n,k}^{1/2}(0)\big( \max_{v \in \mathbb{R}^d: \| v \|_\infty \leq 1} v^\top {K}_\mathcal{X}^{-1}v \big)^{1/2}$. \\ 
The estimation of the term~$\|V_i^\dagger\|$ can be done analogously to the proof of Proposition~\ref{cor:approx:error:generator}. 
Due to the full row rank, we can write $V_i = \Big[\tilde{V}_i \mid \bar{V}_i\Big]$ with $\tilde{V}_i = \begin{bmatrix}\mathbbm{1}_{m+1}^\top\\U_{m+1}\end{bmatrix} \in \R^{m + 1\times m + 1}$ with $U_{m+1} = \big[ {u}_{i0} \mid \cdots \mid {u}_{im} \big]$, such that $U_{m} = \big[ {u}_{i1} \mid \cdots \mid {u}_{im} \big]$ is invertible. For the permutation set~$I_j$ and with Theorem~\ref{t:main} and Remark~\ref{rem:extending_V}, we obtain 
\begin{align*}
   \sigma_{\min} (V_i)^{-1} &\leq \sigma_\mathrm{min}(\tilde{V}_i)^{-1} \\
    &\leq \Big(\Theta(U_m^{-1}u_{i0}) \cdot \min\{m + 1, \|u_{i\iota_m}\| \prod_{s = 1}^{m - 1}(1 - \cos\theta(u_{i\iota_s}, S_{I_s}))\}\Big)^{-\frac12}\eqqcolon \tilde{\sigma}_i
\end{align*}
for the subspaces~$S_{I_s} = \operatorname{span}\{u_{ij} : j\in I_s\}$. This gives us the error bound in \eqref{eq:kEDMD:error bound}.
\end{proof}

\begin{remark}[Kernel generator EDMD]
Instead of using kernel EDMD to learn a surrogate model for discrete-time dynamics, the same method can be used to approximate a continuous-time system of the form \eqref{eq:sys:continuous}. Therefore, we consider data given by \eqref{eq:data:clustered}
and use the same method as in Section~\ref{subsec:generator:EDMD} to approximate the data set 
\begin{align}% \label{eq:data:ideal:generator} 
    (x_i, g_0(x_i), G(x_i)) \quad \text{ for } \quad i \in \{1,\ldots,d\}
\end{align}
with $\Psi:\R^n \rightarrow \R^n$ as the identity function. Using these artificial data points, we can use kernel EDMD as described above to approximate the functions $g_0$ and~$G$.
\end{remark}

    \section{Applications}
\label{sec:application}
To illustrate the findings, we investigate several excitation strategies in two application scenarios.
First, a six-dimensional input space~($m=6$) is considered.
Besides analyzing the minimum singular values of the input-data matrices~$V$, we study how the choice of input affects the approximation error of the vector fields associated with the approximately discretized kinematics of a free-floating rigid body.
In the second example, we investigate the excitation strategies for data collection of a nonholonomic mobile robot.
In the latter setting, inspired by practical robotics, the complete pipeline is investigated, including the use of the identified discrete-time vector fields to generate artificial data for bilinear EDMDc for control-affine systems.

For both application examples, we consider the same four excitation strategies, independent of the input dimension~$m$ and the specific control-affine mapping to be identified.
The colors introduced in the following are used consistently throughout all subsequent illustrations. 
The first strategy $U_\textnormal{r}$~(blue) randomly draws inputs $u\in\mathbb{U}$ according to a uniform distribution.
The second strategy~$U_\perp$~(red) selects the inputs according to Cor.~\ref{cor:scaling}, i.e., it uses an orthogonal basis of $\R^m$ and an additional input such that~\eqref{eq:NCO} from Lem.~\ref{lem:NCO} is satisfied.
The third strategy~$U_\triangle$~(orange) is based on Prop.~\ref{prop:simplex}, i.e., it selects $m+1$ inputs corresponding to the vertices of a regular $m$-simplex. 
The strategies~$U_\perp$ and~$U_\triangle$ can be seen as offline strategies, since they assume full control over all $\dd+1\geq m+1$ applied inputs.
In practice, however, such freedom may not always be available, particularly in a local neighborhood of a given observation point in the state space.
Motivated by this setting, the fourth strategy~$U_{\measuredangle}$~(pink) deliberately constructs the $(m+1)$th input when $m$ random inputs are given.
For the high-dimensional example, the subspace angles of the random inputs are evaluated, and the input associated with the smallest subspace angle is deliberately replaced according to Thm.~\ref{t:main} in order to maximize the minimum singular value of the resulting input-data matrix.
In the low-dimensional example, the $(m+1)$th input is constructed analogously once the first $m$ random inputs are given, i.e., the input is appended rather than necessarily replacing the random input associated with the smallest subspace angle.
For comparability, the random inputs used in both scenarios coincide with the inputs of $U_{\textnormal{r}}$. 

\subsection{Kinematics of a Free-Floating Rigid Body}\label{sub:airship}
We first investigate the four excitation strategies~$U_\textnormal{s}$, $\textnormal{s}\in\{\textnormal{r},\perp, \triangle,\measuredangle\}$, for a six-dimensional input space, where a total of $d=10^4$ observation points is considered.
For the random strategy~$U_{\textnormal{r}}$, $m+1=\dd+1=7$ inputs~$u_{ij}\in\mathbb{U}$, $j\in[0:\dd]$, are sampled independently for each observation point~$i\in[d]$, where~$\mathbb{U}=\left\lbrace u\in\mathbb{R}^m  : \| u \| \leq r_u \right\rbrace$ with $r_u=10$.
For the online strategy~$U_{\measuredangle}$, at each observation point~$i$, the inputs~$u_{ij}$ follow based on the random strategy~$U_{\textnormal{r}}$ as described above.
The inputs of the second and third strategy are scaled by~$\alpha=\sqrt{\dd+1}$ in order to satisfy the conditions of Cor.~\ref{cor:scaling} and Prop.~\ref{prop:simplex}, respectively.
Applying the four input strategies to each point~$i\in[d]$, Fig.~\ref{fig:airship}~(left) shows the resulting minimum singular values of the corresponding input-data matrices $V_{i,\textnormal{s}} \coloneqq V_i(U_\textnormal{s})$, $\textnormal{s}\in\{\textnormal{r},\perp, \triangle,\measuredangle\}$, sorted in descending order w.r.t.~$i$.
As derived in Cor.~\ref{cor:scaling} and Prop.~\ref{prop:simplex}, the strategies~$U_\perp$ and~$U_\triangle$ attain the maximum minimum singular value~$\sqrt{\dd+1}$.
For visual clarity, Fig.~\ref{fig:airship}, as well as all subsequent plots of this type, only displays a subset of the corresponding points.
In contrast, the random strategy~$U_\textnormal{r}$ yields substantially smaller minimum singular values.
Importantly, the proposed strategy~$U_{\measuredangle}$ achieves a significant improvement by deliberately selecting the $(\dd+1)$th input.
In many cases, it approximately meets the value of the free-choice strategies~$U_\perp$ and~$U_\triangle$, see Fig.~\ref{fig:robot_singular_values}~(left).
\begin{figure*}
	\centering
    \includegraphics[scale=0.8]{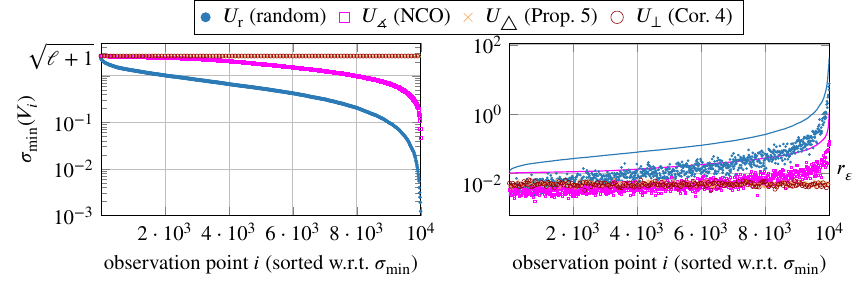}
    \caption{
    Minimum singular values of~$V$ for the different input strategies~$U_{\textnormal{s}}$ ($\dd+1=m+1=7$) sorted in descending order~(left) and the corresponding approximation errors $\left\lVert \begin{bmatrix}
            g_0(x_i) & G(x_i)
        \end{bmatrix} - \begin{bmatrix}
            \hat{g}_0^\star & \widehat{G}^\star
        \end{bmatrix}\right\rVert_\mathrm{max}$ of the approximately discretized vector fields~\eqref{eq:kinematics_airship} at the observation points~$x_i$~(right), see also Prop.~\ref{prop:approx:error}.
        }
    \label{fig:airship}
\end{figure*}

Crucially, as shown in Prop.~\ref{prop:approx:error}, the minimum singular value of~$V$ affects the approximation quality of control-affine mappings, e.g., when identifying the vector fields at some observation point~$x_i$ for discrete-time control systems.
To this end, we consider the approximately discretized kinematics of a free-floating rigid body in the space, which can be expressed by
\begin{align}\label{eq:kinematics_airship}
    x^+ = 
    \begin{bmatrix}
        p^+ \\ \Theta^+
    \end{bmatrix} =
    x + \delta_t \begin{bmatrix}
        R (\Theta) & 0 \\
        0 & T (\Theta) 
    \end{bmatrix}
    \begin{bmatrix}
        v \\ \omega
    \end{bmatrix}
    = g_0 (x) + G(x) u 
\end{align}
with~$p, \, \Theta \in\mathbb{R}^3$ denoting the body's position and orientation (using Euler angles), respectively, and $v,\, \omega \in\mathbb{R}^3$ denoting the translational velocity and rotational velocity, respectively, in the body-fixed frame.
Moreover, $R,T\!: \mathbb{R}^3 \to \mathbb{R}^{3\times 3}$ is the rotation matrix and angular-velocity transformation matrix, respectively.
The kinematics~\eqref{eq:kinematics_airship} is of relevance, for example, when considering fully actuated airships or underwater vehicles, see, e.g.,~\cite{ChaabaniAzouz22,DueckerEtAl20}, particularly when employing an underlying low-level controller governing the body velocities such that the state and input follow as $x=[p^\top \ \Theta^\top]^\top\in\mathbb{X}\subseteq \mathbb{R}^n$, $n=6$, and $u = [v^\top \ \omega^\top]^\top \in\mathbb{U}\subset \mathbb{R}^m$, $m=6$, respectively.
In the following, the $d$ observation points~$x_i$ are sampled uniformly from the cube~$[-1, \, 1]^3\, \unit{m}$ for the position coordinates and $[-10, \, 10]^3 \,\unit{^\circ}$ for the Euler angles, thereby avoiding singularities in~$T(\Theta)$.
The discretization time is chosen as~$\delta_t = \unit[10]{ms}$.
The actual samples~$x_{ij}$, at which the corresponding inputs are applied, are generated via rejection sampling from a uniform distribution over~$\mathcal{B}_{r_x}(x_i)$ with~$r_x=10^{-2}$.

Using the four excitation strategies, successor states~$x_{ij}^+$, $i\in[d]$, $j\in[0:\dd]$, are generated to approximate the vector fields~$g_0$ and $G$ at the observation points~$x_i$, cf.~Rem.~\ref{rem:EDMD:operator}.
The same randomly sampled observation points~$x_i$ and neighboring states~$x_{ij}$ are utilized for all four strategies.
Figure~\ref{fig:airship}~(right) illustrates the resulting approximation errors of the vector fields for the different excitation strategies~$U_{\textnormal{s}}$, evaluated according to Prop.~\ref{prop:approx:error}.
The strategies~$U_\perp$ and~$U_\triangle$ yield approximation errors of comparable magnitude, consistent with the theoretical upper bound~$r_\varepsilon$, indicated in Fig.~\ref{fig:airship}~(right).
For the online strategy~$U_{\measuredangle}$, the approximation error remains of similar magnitude for most observation points~$i$, with the corresponding upper bound from Prop.~\ref{prop:approx:error} shown as the pink solid line.
In contrast, the random strategy $U_\textnormal{r}$ produces both a larger theoretical upper bound (solid blue line), due to smaller values of~$\sigma_{\min}$, and noticeably larger empirical errors (blue markers).
Overall, Fig.~\ref{fig:airship}~(right) demonstrates that the choice of excitation inputs substantially affects the accuracy of the approximated vector fields~$\hat{g}_0^\star$ and~$\hat{G}^\star$ at $x_i$.
Moreover, deliberately selecting even a single input based on previously sampled random inputs can significantly improve the approximation quality.
The following example investigates how this further impacts bilinear EDMDc for learning control-affine systems, as an example of how to use the findings in a robotics scenario.

\subsection{Nonholonomic Mobile Robot}
In addition to analyzing the excitation properties of different input strategies for a nonholonomic mobile robot, we investigate how these strategies affect bilinear EDMDc with flexible sampling for approximating the Koopman operator, see Sec.~\ref{sec:Koopman}.
In particular, we consider an approximately discretized, control-affine version of the robot kinematics, given by
\begin{align}\label{eq:kinematics_robot}
    x^+ = g_0 (x) + G(x) u = x + \delta_t \begin{bmatrix} \frac{R}{2} \cos x_3 & \frac{R}{2} \cos x_3 \\[3pt] \frac{R}{2} \sin x_3 & \frac{R}{2} \sin x_3 \\[3pt] -\frac{R}{L} & \frac{R}{L}    \end{bmatrix} \begin{bmatrix} \dot{\varphi}_{\ell} \\ \dot{\varphi}_{\textnormal{r}} \end{bmatrix},
\end{align}
where~$\delta_t\in\mathbb{R}_{>0}$ denotes the sampling time.
The state $x\in\mathbb{X} \subseteq \mathbb{R}^n$, $n=3$, 
consists of the robot's planar position $\begin{bmatrix}x_1 & x_2\end{bmatrix}^\top$ and its heading angle~$x_3$ w.r.t.\ the positive $x$-axis of the inertial frame, see, e.g.,~\cite{rose25Koopman} for a schematic illustration.
The robot is actuated through the angular velocities of the left and right wheels, respectively, resulting in the input $u=\begin{bmatrix} \dot{\varphi}_\ell & \dot{\varphi}_{\textnormal{r}}\end{bmatrix}^\top \in\mathbb{U}\subset \mathbb{R}^m$, $m=2$, where $\mathbb{U}=\left\lbrace u\in\mathbb{R}^m  : \| u \| \leq \unitfrac[20]{rad}{s} \right\rbrace$.
The driven wheels have radius~$R$ and are mounted at a distance~$L$ along a common axis.
Note that we deliberately consider only the approximate discrete-time formulation~\eqref{eq:kinematics_robot} of the real-world vehicle to isolate, as far as possible, the errors induced by the excitation strategy itself.
This avoids additional errors arising from the exact discretization, which is generally not control-affine for $\delta_t > 0$, compare~\cite{worth2015nonhol}.
Nevertheless, for $\delta_t \to 0$,~\eqref{eq:kinematics_robot} coincides with the exact discretization.

In the following, $d=180$ observation points are drawn from a uniform distribution within the plane~$[-0.5, \, 0.5]^2 \, \unit{m}$, whereas the samples are spaced in an equidistant fashion in the orientation~$x_3\in[0, \, 2\pi)$.
The actual samples~$x_{ij}$ required to construct the artificial data set are again generated via rejection sampling from a uniform distribution over~$\mathcal{B}_{r_x}(x_i)$ with~$r_x=10^{-3}$.
In addition to considering how the choice of applied inputs~$u_{ij}$, $j\in[0:\dd]$, affects the intermediate approximations~$\hat{g}_0^\star$ and~$\hat{G}^\star$ at $x_i$, see also the previous example, it is essential to investigate how these choices ultimately influence the accuracy of the derived Koopman-based surrogate models, since the approximated vector fields are required for flexible sampling in bilinear EDMDc.
Importantly, for the surrogate models, also other characteristics, such as the finite-dimensional dictionary, play a significant role.
\begin{figure*}
    \centering
	\includegraphics[scale=0.8]{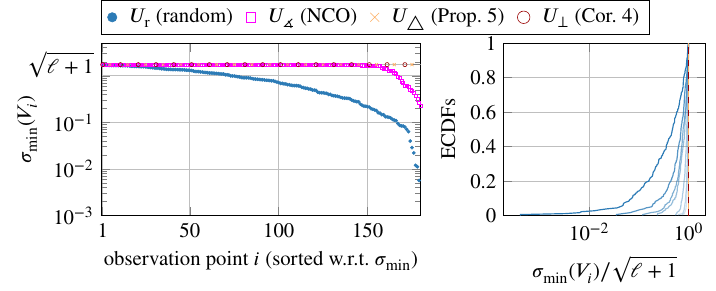}
	% \vspace{-2em}
    \caption{Minimum singular values over the observation points, sorted w.r.t.\ $\sigma_{\min}$, for $\dd+1=m+1=3$ neighbors~(left) and the empirical cumulative distribution functions (ECDFs) of the normalized minimum singular values for~$U_{\textnormal{r}}$, $U_\triangle$, and $U_\perp$ with varying numbers~$(\dd+1)\in\{3,\, 4,\, 5,\, 6,\, 10,\, 20,\, 30\}$, where the line opacity decreases with increasing~$\dd$~(right).
    }
    \label{fig:robot_singular_values}
\end{figure*}

Again, the aforementioned four input strategies~$U_\textnormal{s}$, $\textnormal{s}\in\{\textnormal{r},\perp,\triangle,\measuredangle\}$ are employed.
The inputs of the second and third strategies are scaled by~$\alpha=2\pi$, ensuring that $\alpha \geq \sqrt{\dd+1}$. 
Figure~\ref{fig:robot_singular_values}~(left) shows the resulting minimum singular values of the corresponding input-data matrices $V_{i,\textnormal{s}} \coloneqq V_i(U_\textnormal{s})$, $\textnormal{s}\in\{\textnormal{r},\perp, \triangle,\measuredangle\}$ for the case~$\dd+1=m+1=3$ for all $i$, sorted in descending order.
The results are consistent with the observations from the previous higher-dimensional example, see Sec.~\ref{sub:airship} for a detailed discussion.
Since it is sufficient to have $m+1$ directions spanning large subspace angles, see also Rem.~\ref{rem:extending_V}, the probability that the randomly sampled inputs yield larger values of~$\sigma_{\min}(V_{i, \textnormal{r}})$ naturally increases with~$\dd$.
Consequently, the distribution of~$\sigma_{\min}(V_{i,\textnormal{r}})$ becomes more concentrated and approaches the bound~$\sqrt{\dd+1}$ as~$\dd$ increases.
This effect is illustrated in Fig.~\ref{fig:robot_singular_values}~(right), which shows the empirical cumulative distribution functions~(ECDFs) of the normalized minimum singular values for different input strategies and increasing numbers of inputs~$\dd+1$, where the opacity decreases with increasing~$\dd$.
The approximation errors of the vector fields at the observation points exhibit behaviors qualitatively similar to that observed in the previous higher-dimensional example for the four input strategies~$U_{\textnormal{s}}$.
A detailed discussion of these errors is therefore omitted for brevity and we focus on bilinear EDMDc instead. 
To this end, the approximated input vector fields are used to generate the artificial data points~${F}^\varepsilon(x_i, e_k)$, compare Rem.~\ref{rem:EDMD:operator}.
These data points are then lifted to construct the matrices required for bilinear EDMDc of control-affine systems, see Sec.~\ref{sec:Koopman}.
As dictionary, we choose $\Psi=\{ 1,\,x_1,\,x_2,\,\cos x_3,\,\sin x_3 \}$ since this choice has shown good results in previous studies~\cite{bold2024robot, rose25Koopman}.
The orientation is reprojected using the four-quadrant inverse tangent. 
This reprojection does introduce additional errors, meaning that the following results are not solely influenced by the excitation strategy used for flexible sampling.
For a given reference input trajectory that nominally yields a lemniscate-shaped trajectory, Fig.~\ref{fig:robot_lemniscate} shows the resulting open-loop trajectories~(left) of the Koopman surrogate models obtained from flexible sampling with different excitation strategies, together with the corresponding Euclidean and orientation one-step errors~(right). 
As illustrated, for $\dd + 1=m+1=3$, the surrogate model derived from randomly chosen inputs~$U_{\textnormal{r}}$ exhibits significantly larger one-step errors compared to the other three strategies.
Importantly, choosing the $(m+1)$th input according to Thm.~\ref{t:main} already appears to substantially improve the accuracy of the resulting Koopman model.
However, increasing the number of random inputs for~$U_{\textnormal{r}}$ to $\dd +1= 4$ also considerably improves the approximation quality, cf.\ Fig.~\ref{fig:robot_singular_values}~(right), which is likewise reflected in Fig.~\ref{fig:robot_lemniscate} (blue dashed line).
\begin{figure*}
    \centering
	\includegraphics[scale=0.8]{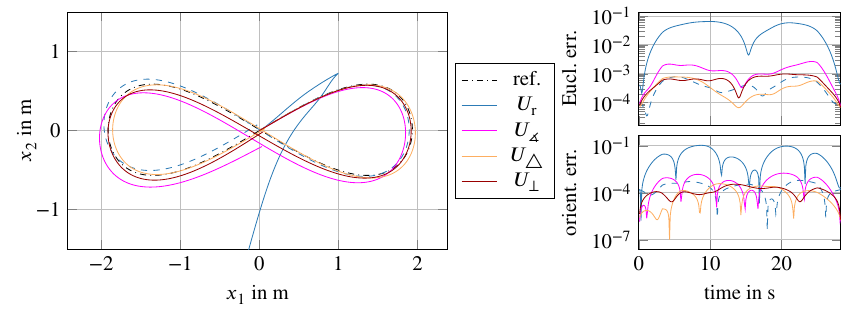}
    % \vspace{-2em}
    \caption{Open-loop trajectories~(left) of the Koopman surrogate models obtained with different excitation strategies during flexible sampling, along with the corresponding one-step prediction errors~(right).
	The solid lines correspond to $\dd+1=m+1=3$ neighbors, while the dashed blue for~$U_{\textnormal{r}}$ additionally shows the case~$\dd+1=4$.}
    \label{fig:robot_lemniscate}
\end{figure*}

    \section{Conclusions}
    In this article, we presented a data-fitting framework for identifying control-affine systems. 
We focused on the role of input-data design in ensuring the quality of the resulting model. 
Our main contribution is the derivation of lower bounds for the minimal singular value of the input-data matrix, which directly impacts the robustness of the respective regression. 
We proposed geometrically interpretable excitation schemes to construct control inputs from scratch, which achieve optimality, and presented a guideline for extending a given set of inputs using angle-based information in the input space, including concrete strategies to choose exciting inputs. 
We showed how these excitation schemes and singular-value bounds can be integrated into bilinear EDMDc within the Koopman framework, enabling flexible sampling. Furthermore, we used our results to prove uniform error bounds on the full approximation error for kernel EDMD for control-affine systems.
Finally, we exemplified the findings for the different excitation schemes for an example with six control inputs and validated the effectiveness of the excitation schemes in combination with the flexible sampling EDMDc approach at the example of the system identification of a nonholonomic mobile robot, demonstrating that the theoretical findings are relevant and directly applicable to practical problems from engineering.

\bibliographystyle{abbrv}
\bibliography{refs}

\section{Appendix}\label{sec:appendix}

In this appendix, we prove the preparatory Lemma~\ref{t:kaur_ext} and, then, our main result, i.e., Theorem~\ref{t:main}.

\begin{proof}[Proof of Lemma~\ref{t:kaur_ext}]
As already stated in Remark \ref{r:kaur}, \eqref{e:kaur_ext} trivially holds if $u\in\operatorname{ran} Q$. Hence, let us assume that $u\notin\operatorname{ran}Q$. Let $Q = UU^\top$ with $U\in\R^{n\times q}$, where $q = \operatorname{rank} U = \operatorname{rank} Q < n$. Then
\[
P + Q = uu^\top + UU^\top = \begin{bmatrix}u & U\end{bmatrix}\begin{bmatrix}u^\top\\U^\top\end{bmatrix} = AA^\top,
\]
where $A = \begin{bmatrix}u & U\end{bmatrix}\in\R^{n\times (1+q)}$. Note that $\operatorname{rank} A = 1+q$ as $u\notin\operatorname{ran}Q = \operatorname{ran}U$. Hence, $A^\top A\in\R^{(1+q)\times (1+q)}$ is positive definite, and $\lambda_{\min}(P+Q) = \lambda_{\min}(A^\top A)$. We have
\[
A^\top A = \begin{bmatrix}u^\top\\U^\top\end{bmatrix}\begin{bmatrix}u & U\end{bmatrix} = \begin{bmatrix}\|u\|^2 & u^\top U \\ U^\top u & U^\top U\end{bmatrix}.
\]
In what follows, we shall abbreviate $P_U := P_{\operatorname{ran} U}$ denoting the orthogonal projection onto $\operatorname{ran} U$. The Schur complement of this block matrix representation of $A^\top A$ equals
\[
\|u\|^2 - u^\top U(U^\top U)^{-1}U^\top u = \|u\|^2 - u^\top P_{U}u = \|u\|^2  - \|P_{U}u\|^2 =: s.
\]
We may thus write
\[
A^\top A =
\begin{bmatrix}
    1 & -w^\top \\
    0 & I_q
\end{bmatrix}
\begin{bmatrix}
    s & 0\\
    0 & U^\top U
\end{bmatrix}
\begin{bmatrix}
    1 & 0\\
    -w & I_q
\end{bmatrix},
\]
where $w = -(U^\top U)^{-1}U^\top u$. Hence,
\begin{align}\label{e:schur}
(A^\top A)^{-1} = 
\underbrace{\begin{bmatrix}
    1 & 0 \\
    w & I_q
\end{bmatrix}}_{Z_w} \underbrace{\begin{bmatrix}
    1/s & 0\\
    0 & (U^\top U)^{-1}
\end{bmatrix}}_L
\underbrace{\begin{bmatrix}
    1 & w^\top \\
    0 & I_q
\end{bmatrix}}_{Z_w^\top}.
\end{align}
Therefore, we may estimate
\begin{align*}
    \|(A^\top A)^{-1}\|_2
    &= \|Z_w LZ_w^\top\|_2 = \sup\big\{\langle LZ_w^\top x,Z_w^\top x\rangle : \|x\|=1\big\} \\
    &= \sup\big\{\langle Ly,y\rangle : \|Z_w^{-\top}y\|=1\big\} \\
&= \sup\big\{s^{-1}y_1^2 + \langle (U^\top U)^{-1}y_2,y_2\rangle : \|Z_w^{-\top}y\|=1\big\}\\
&= \sup\big\{ay_1^2 + \|By_2\|^2 : \|Z_w^{-\top}y\|=1\big\}
\end{align*}
where $a = s^{-1}$ and $B = (U^\top U)^{-1/2}$. 
Now, observe that $\|Z_w^{-\top}y\|^2_2 = (y_1 - w^\top y_2)^2 + \|y_2\|^2$. Setting $x_1 = y_1 - w^\top y_2$ and $x_2 = y_2$, we obtain
\[
\|(A^\top A)^{-1}\| = \sup\big\{a(x_1 + w^\top x_2)^2 + \|Bx_2\|^2 : x_1^2 + \|x_2\|^2 = 1\big\}.
\]
We further set $b = \|B^2\|$. Note that $b^{-1} = \|(U^\top U)^{-1}\|^{-1} = \lambda_{\min}(U^\top U)$. We have
\begin{align*}
    a(x_1 + w^\top x_2)^2 + \|Bx_2\|^2
    & = a\big[x_1^2 + 2x_1\cdot w^\top x_2 + (w^\top x_2)^2\big] + \|Bx_2\|^2\\
    &\le a\big[x_1^2 + 2|x_1|\|x_2\|\|w\| + \|w\|^2\|x_2\|^2\big] + b\|x_2\|^2.
\end{align*}
Together with $x_1^2 + \lVert x_2\rVert^2 -2|x_1|\|x_2\| = (\lvert x_1\rvert - \lVert x_2\rVert)^2\geq 0$ and $x_1^2 + \lVert x_2\rVert^2 = 1$, one finds $a(x_1 + w^\top x_2)^2 + \|Bx_2\|^2 \leq a\big[x_1^2 + \|w\| + \|w\|^2\|x_2\|^2\big] + b\|x_2\|^2$.
Recall that $w = -(U^\top U)^{-1}U^\top u$. Hence, $Uw = -P_{U}u$ so that
\begin{align*}
\|u\|^2 - \tfrac 1{a} &= \|u\|^2 - s
= \|P_{U}u\|^2 = \|Uw\|^2 \ge\lambda_{\min}(U^\top U)\|w\|^2 = b^{-1}\|w\|^2.
\end{align*}
Hence, $\|w\|^2\le b(\|u\|^2-\frac 1a) = b\|P_Uu\|^2$ so that, with $\beta = \max\{1,\|u\|^2b\}$,
\begin{align*}
    a(x_1 + w^\top x_2)^2 + \|Bx_2\|^2
    &\le a\left[x_1^2 + \sqrt{b}\|P_Uu\| + b(\|u\|^2-\tfrac 1{a})\|x_2\|^2\right] + b\|x_2\|^2\\
    &= a\left[x_1^2 + \sqrt{b\|u\|^2}\frac{\|P_Uu\|}{\|u\|} + b\|u\|^2\|x_2\|^2\right] \\
    &\le a\left[\beta + \sqrt{\beta}\frac{\|P_Uu\|}{\|u\|}\right]\\
&\le a\beta\left[1 + \frac{\|P_Uu\|}{\|u\|}\right] = \frac{a\beta}{\|u\|}\big(\|u\| + \|P_Uu\|\big)\\
&= \frac{a\beta}{\|u\|}\cdot\frac{\big(\|u\| + \|P_Uu\|\big)\big(\|u\| - \|P_Uu\|\big)}{\|u\| - \|P_Uu\|}\\
&= \frac{\beta}{\|u\|^2}\cdot\frac{1}{\left[1 - \frac{\|P_Uu\|}{\|u\|}\right]} = \frac{\beta}{\|u\|^2(1 - \cos\theta(u,\operatorname{ran}U))}.
\end{align*}
Therefore, with $\tau = 1 - \cos\theta(u,\operatorname{ran} U)$,
\begin{align*}
    \lambda_{\min}(A^\top A) &= \|(A^\top A)^{-1}\|^{-1} \ge \tau\|u\|^2\cdot\min\left\{1,\frac{1}{\|u\|^2b}\right\} \\
    &= \tau\cdot\min\left\{\|u\|^2,\lambda_{\min}(U^\top U)\right\},
\end{align*}
which is the claim as $\lambda_{\min}(U^\top U) = \lambda_{\min}(UU^\top) = \lambda_{\min}(Q)$.
\end{proof}

\begin{proof}[Proof of Theorem~\ref{t:main}]The proof is carried out in two steps. In the first step, we shall derive a lower bound for $\sigma_\mathrm{min}(V)$ in terms of $\sigma_\mathrm{min}(U_m)$, namely
\begin{equation}\label{e:sigma_V-test}
    \sigma_\mathrm{min}^2(V) \ge (1 - \cos\theta(\one_{m+1},\operatorname{ran} Q))\cdot\min\big\{m+1,\,\sigma_\mathrm{min}^2(U_m)\big\},
\end{equation}
where $Q = U_{m+1}^\top U_{m+1}$.
If $V$ is not invertible, then $\one_{m+1}\in\operatorname{ran}U_{m+1}^\top = \operatorname{ran}Q$, and \cref{e:sigma_V-test} holds trivially. Assume that $V$ is invertible, and note that $V^\top V = P + Q$, where $P = \one_{m+1}\one_{m+1}^\top$. We apply Proposition~\ref{t:kaur_ext} and obtain
\begin{equation*}
\sigma_\mathrm{min}^2(V) = \lambda_\mathrm{min}(P+Q) \geq (1 - \cos\theta(\one_{m+1},\operatorname{ran}Q))\cdot\min\{\|\one_{m+1}\|^2,\,\lambda_{\min}(Q)\}.
\end{equation*}
Since $U_m$ is assumed to be invertible, we have
\begin{equation*}
        \lambda_{\min}(Q) = \lambda_{\min}(U_{m+1}U_{m+1}^\top) = \lambda_{\min}(U_mU_m^\top + u_0u_0^\top)\,\ge\,\lambda_{\min}(U_mU_m^\top) = \sigma_\mathrm{min}^2(U_m).
\end{equation*}
Hence, and as $\|\one_{m+1}\|^2 = m+1$, we arrive at \cref{e:sigma_V-test}. For completing the reduction from $\sigma_\mathrm{min}(V)$ to $\sigma_\mathrm{min}(U_m)$, the computation of the value $\cos\theta(\one_{m+1},\operatorname{ran}Q)$ remains. By definition of the cosine, we obtain
\[
\cos^2\theta(\one_{m+1},\operatorname{ran}Q) = \frac{\|P_{\operatorname{ran}Q}\one_{m+1}\|^2}{\|\one_{m+1}\|^2} = \frac{m+1 - \|P_{\ker Q}\one_{m+1}\|^2}{m+1}.
\]
Since $\ker Q = \operatorname{span}\{[\begin{smallmatrix} 1\\-v\end{smallmatrix}]\}$ with $v = U_m^{-1} u_0$, we have \[P_{\ker Q}x = \frac 1{1+\|v\|^2}(x^\top\left[\begin{smallmatrix}1\\-v\end{smallmatrix}\right])\left[\begin{smallmatrix}1\\-v\end{smallmatrix}\right],\] so that
\[
\cos^2\theta(\one_{m+1},\operatorname{ran}Q) = \frac{m+1 - \frac{(1 - \one_m^\top v)^2}{1+\|v\|^2}}{m+1} = (1 - \Theta(v))^2.
\]
Hence, $1 - \cos\theta(\one_{m+1},\operatorname{ran}Q) = \Theta(U_m^{-1}u_0)$.

In the second step, we estimate $\sigma_\mathrm{min}^2(U_m) = \lambda_{\min}(U_m^\top U_m)$ from below. For this, recall the enumeration $\{i_1,\ldots,i_m\}$ from Theorem \ref{t:main}. It is clear that a permutation of the columns of $U_m$ leaves the singular value invariant. Hence, we may assume that $i_s = s$ and $I_s = \{s+1,\ldots,m\}$ for each $s\in [1:m-1]$. We write $U_m = \begin{bmatrix}u_1 & U_{2:m}\end{bmatrix}$ and set $P = u_1u_1^\top$ as well as $Q = U_{2:m}U_{2:m}^\top$. By Proposition~\ref{t:kaur_ext},
\begin{equation*}
        \begin{split}
        \lambda_{\min}(U_mU_m^\top) &= \lambda_{\min}(P+Q)\\
        &\geq (1 - \cos\theta(u_1,\operatorname{ran}U_{2:m}))\cdot\min\big\{\|u_1\|^2,\,\lambda_{\min}(U_{2:m}U_{2:m}^\top)\big\},
        \end{split}
\end{equation*}
but observe that $\lambda_{\min}(U_{2:m}U_{2:m}^\top) = \min\{\|U_{2:m}x\|^2 : \|x\|=1\}\le \|U_{2:m}e_1\|^2 = \|u_2\|^2\le\|u_1\|^2$, where $e_1$ is the first canonical basis vector in $\R^{m-1}$. Hence, we obtain $\lambda_{\min}(U_mU_m^\top)\,\ge\,(1 - \cos\theta(u_1,S_{I_1}))\cdot\lambda_{\min}(U_{2:m}U_{2:m}^\top)$. 
Proceeding further in this way yields
\begin{equation*}
        \lambda_{\min}(U_mU_m^\top) \geq \prod_{s=1}^{m-1}(1 - \cos\theta(u_s,S_{I_s})) \cdot \underbrace{\lambda_{\min}(U_{m:m}U_{m:m}^\top)}_{= \|u_m\|^2}.
\end{equation*}
This completes the proof of Theorem \ref{t:main}.
\end{proof}

\end{document}